\documentclass[final,5p,times,twocolumn]{elsarticle}

\usepackage{amsmath,amssymb,amsfonts,amsthm}
\newtheorem{definition}{Definition}
\newtheorem{theorem}{Theorem}
\usepackage{algorithmic}
\usepackage{algorithm}
\usepackage{graphicx}
\usepackage{textcomp}
\usepackage{xcolor}
\usepackage{booktabs}
\usepackage{url}
\usepackage{multirow}
\usepackage{bm}

\newcommand{\R}{\mathbb{R}}
\newcommand{\safeSet}{\Omega_{\mathrm{safe}}}
\newcommand{\unsafeSet}{\Omega_{\mathrm{unsafe}}}
\newcommand{\constraint}{h}

\journal{Preprint}

\usepackage{hyperref}

\begin{document}

\begin{frontmatter}

\title{Hamilton-Jacobi Reachability Based Safe Reinforcement Learning for Emergency Collision Avoidance}

\author[inst1]{Yuhong Jiang\fnref{fn1}}\ead{jiang-yh23@mails.tsinghua.edu.cn}
\author[inst1]{Shiyue Zhao\fnref{fn1}}\ead{zhaosy21@mails.tsinghua.edu.cn}
\author[inst1,inst2]{Junzhi Zhang\corref{cor1}}\ead{jzhzhang@tsinghua.edu.cn}
\author[inst1]{Junfeng Zhang}\ead{zhangjf20@mails.tsinghua.edu.cn}
\author[inst1]{Xinhan Li}\ead{lixinhan21@mails.tsinghua.edu.cn}
\author[inst1]{Shijie Zhao}\ead{zhaosj17@mails.jlu.edu.cn}
\author[inst1,inst2]{Chengkun He}\ead{hechengkun@tsinghua.edu.cn}

\cortext[cor1]{Corresponding author.}
\fntext[fn1]{These authors contributed equally to this work.}

\affiliation[inst1]{organization={School of Vehicle and Mobility, Tsinghua University}, city={Beijing}, country={China}}
\affiliation[inst2]{organization={State Key Laboratory of Intelligent Green Vehicle and Mobility, Tsinghua University}, city={Beijing}, country={China}}

\begin{abstract}
Emergency collision avoidance under extreme driving conditions demands safety-critical control that accounts for both obstacle proximity and vehicle dynamic stability over a future time horizon, yet existing methods often rely on instantaneous or local safety evaluations.
This paper proposes a safe reinforcement learning framework guided by a Hamilton-Jacobi (HJ) reachability based motion safety set that provides forward-looking safety supervision for constrained policy optimization.
Specifically, a unified signed safety function is formulated by combining geometric collision margins and chassis stability limits, and is then extended through reachability analysis into a finite-horizon motion safety set that characterizes whether safety can be maintained under future vehicle state evolution.
To enable practical computation, the motion safety set is approximated from offline extreme driving data, mitigating the computational burden of grid-based HJ solvers.
The learned motion safety set is then embedded as a continuous safety cost into a constrained Markov decision process, and a PID-Lagrangian policy optimization scheme is employed to adaptively regulate the Lagrange multiplier for safety constraint enforcement.
Simulation and real-vehicle experiments on low-adhesion obstacle-avoidance scenarios demonstrate that the proposed method achieves higher goal-reaching rates, produces smoother avoidance maneuvers, and maintains larger unified safety margins than baseline methods.
\end{abstract}

\begin{keyword}
Autonomous vehicles \sep Hamilton-Jacobi reachability \sep motion safety set \sep safe reinforcement learning \sep emergency collision avoidance
\end{keyword}

\end{frontmatter}

\section{Introduction}
\label{sec:introduction}

The automotive industry has made significant progress in vehicle intelligence, enabling autonomous vehicles to operate with increasing autonomy in complex traffic environments.
However, ensuring reliable behavior in safety-critical edge cases remains a fundamental challenge \cite{ahangarnejad2021review}.
Among various safety-critical functions, emergency collision avoidance is particularly demanding: when a sudden obstacle appears in the vehicle's path, the system must execute a rapid evasive maneuver within a very short time window, while the available road space, tire-road friction, and vehicle dynamic capability may all be severely limited \cite{tan2025autonomous}.
Unlike nominal driving scenarios, where the vehicle usually operates with sufficient stability margins, emergency maneuvers often push the vehicle close to the limits of tire adhesion and chassis stability.
In this regime, collision avoidance and vehicle stability are no longer separable requirements: a maneuver that is geometrically collision-free may still induce excessive sideslip, yaw instability, or loss of control, whereas an overly conservative stability-oriented action may fail to avoid the obstacle in time \cite{zhao2022beyond,wang2023integrated}.
Therefore, emergency collision avoidance requires a safety representation that can jointly describe external obstacle proximity and internal vehicle dynamic feasibility over future state evolution.
This coupling calls for a systematic characterization of the vehicle's forward-looking motion safety boundary, as well as an effective method to leverage this boundary for improving emergency collision-avoidance safety.

\begin{figure*}[t]
\centering
\includegraphics[width=\textwidth]{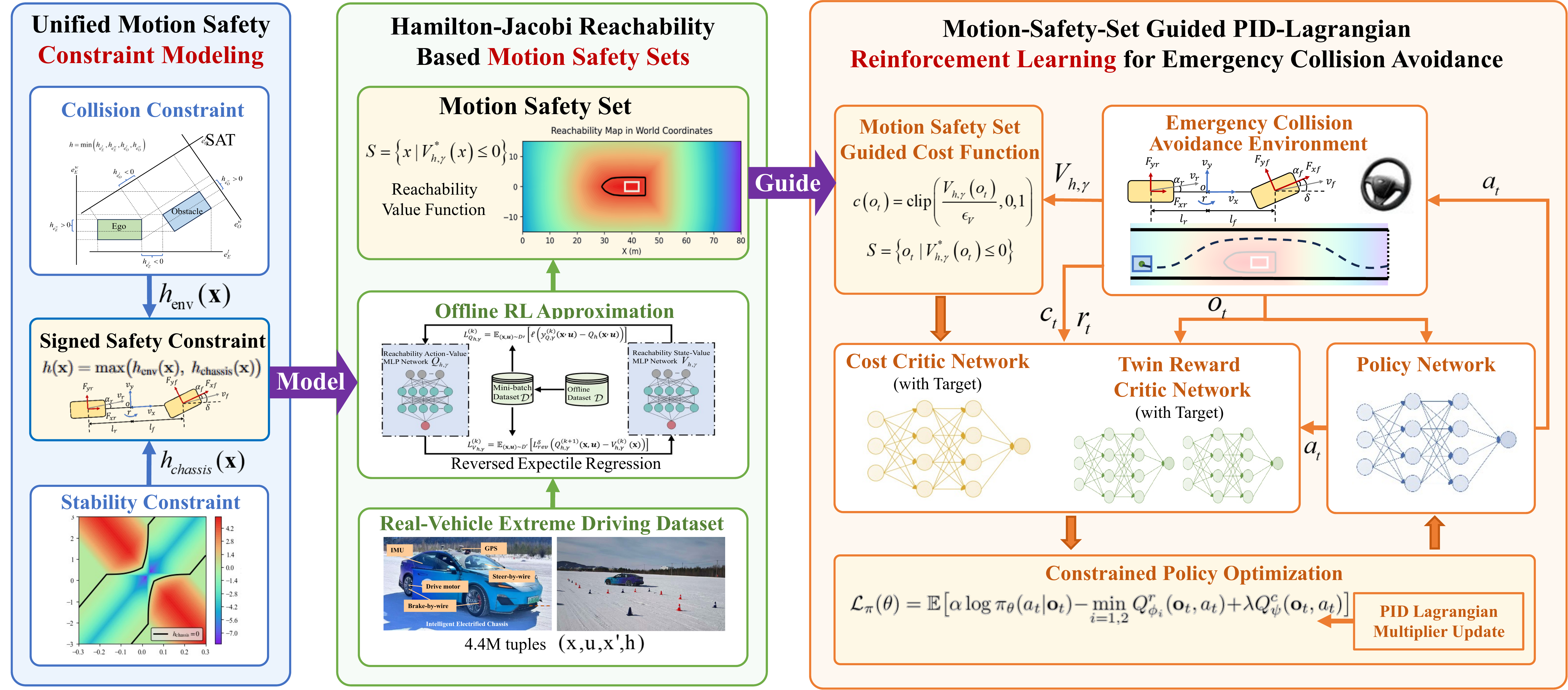}
\caption{Overview of the proposed Hamilton-Jacobi reachability based safe reinforcement learning for emergency collision avoidance.}
\label{fig:framework}
\end{figure*}

Existing safety methods for emergency driving can be broadly viewed as attempts to evaluate, approximate, or enforce a motion safety boundary. 
Early approaches relied on rule-based and geometry-based safety criteria, such as safety-distance constraints and time-to-collision  thresholds, to evaluate whether the ego vehicle is sufficiently separated from surrounding obstacles \cite{gipps1981behavioural,vogel2003comparison}.
Subsequent methods introduced artificial potential fields \cite{rasekhipour2017potential} and driving risk fields \cite{wang2016driving} to represent surrounding danger in a continuous form, while control barrier functions \cite{ames2017control} and predictive safety filters \cite{wabersich2021predictive} further moved from passive risk evaluation to active safety enforcement by restricting admissible control inputs or modifying unsafe commands during execution.

Although these methods have improved vehicle safety, most rely on instantaneous criteria that may fail under aggressive maneuvers, where a currently safe state can still become unsafe under future inputs and disturbances \cite{hsu2024safety}.
A reliable motion safety boundary should therefore account for future dynamic evolution rather than only the current state.
Hamilton-Jacobi reachability theory provides a natural framework for constructing such a dynamics-aware safety boundary \cite{mitchell2005time,margellos2011hamilton}.
By reasoning over system dynamics, control authority, and disturbances, HJ reachability characterizes whether a state can remain safe over a future horizon, making it well suited for emergency driving where collision avoidance and dynamic feasibility must be evaluated together.
Nevertheless, classical HJ solvers rely on grid-based dynamic programming and suffer from the curse of dimensionality, making them intractable for the high-dimensional vehicle dynamics needed to capture realistic extreme driving behavior \cite{Leung2020ReachabilitySafetyAssurance,bansal2021deepreach}.
Recent advances in learning-based reachability approximation have shown that the reachability value function can be estimated directly from data using reinforcement learning and offline value-function learning \cite{fisac2019bridging,herbert2021scalable,hsu2021safety,zheng2024safe}.
However, how to transform learning-based reachability into a practical motion-safety representation that is tightly coupled with vehicle dynamic capability for emergency collision avoidance remains insufficiently explored.

Beyond safety boundary construction, it remains to effectively incorporate such a boundary into the collision-avoidance decision-making and control loop.
Optimization-based controllers, especially model predictive control (MPC) \cite{shang2023emergency,zhou2024personalized} and predictive safety-filter methods \cite{tearle2021predictive}, are able to incorporate safety constraints during online decision making and trajectory tracking.
However, their performance under extreme maneuvers is often affected by model mismatch, nonlinear system dynamics, constraint feasibility, and real-time computational requirements \cite{goh2024beyond}.
Reinforcement learning (RL) provides a complementary approach that directly learns nonlinear collision-avoidance policies from interaction data, and has demonstrated strong performance in complex driving tasks \cite{kiran2022deep,aradi2022survey}.
However, standard RL usually encodes safety through reward shaping, which is a soft constraint that cannot guarantee safety satisfaction and whose effectiveness depends heavily on reward design.
To address this limitation, constrained RL formulates safety-critical control as a constrained Markov decision process (CMDP), where the task objective and safety requirement are separated into reward and cost signals, and the policy is optimized under prescribed safety constraints \cite{achiam2017constrained}.
Recent studies have applied constrained RL to the decision-making and control architecture of autonomous vehicles, demonstrating enhanced safety performance in various driving scenarios \cite{nguyen2023safe,luo2025stability,he2023robust}.
However, the safety constraints in these methods are still often constructed from task-specific indicators, lacking consideration of state evolution safety under vehicle kinematic and dynamic constraints.
This gap motivates leveraging a dynamics-aware motion safety cost into constrained RL to improve the safety performance of autonomous vehicles.

Considering these issues, this paper proposes a safe reinforcement learning framework for emergency collision avoidance, in which a Hamilton-Jacobi reachability based motion safety set is constructed and used to guide constrained policy optimization.
The main contributions of this paper are summarized as follows:
\begin{itemize}
\item \textbf{A unified signed safety function}, integrating geometric collision margins and lateral-yaw stability limits into a single safety interface for emergency collision avoidance.
\item \textbf{A Hamilton-Jacobi reachability based motion safety set}, which lifts the instantaneous safety function into a finite-horizon safety representation through discounted reachability Bellman operator, with the corresponding reachability value approximated from offline extreme-driving data.
\item \textbf{A motion-safety-set-guided constrained reinforcement learning method}, which converts the learned reachability value function into a continuous safety cost for constrained policy optimization. The resulting constrained problem is solved using PID-Lagrangian SAC to improve safety-constraint regulation during learning.
\item \textbf{Validation through simulation and real-vehicle experiments}, demonstrating improved goal-reaching performance, smoother avoidance behavior, and larger motion safety margins in low-adhesion obstacle-avoidance scenarios.
\end{itemize}

\section{Framework Overview}
\label{sec:framework}

The overall framework of the proposed method is shown in Fig.~\ref{fig:framework}.
It follows a modeling--reachability--control pipeline that connects three levels of safety reasoning: unified constraint modeling, motion safety set approximation based on Hamilton-Jacobi reachability, and constrained policy optimization that integrates the learned safety set into reinforcement learning.

First, a unified motion safety constraint is constructed to provide a common safety interface for both external collision avoidance and internal chassis stability.
The external environment constraint evaluates the geometric collision-avoidance margin between the ego vehicle and surrounding obstacles based on oriented-rectangle geometry.
The internal chassis constraint characterizes the lateral-yaw stability limit of the vehicle under near-limit maneuvers.
These two constraints are combined into a single signed safety function, which serves as the basis for the subsequent reachability analysis.

Second, the instantaneous signed safety constraint is lifted to a dynamic motion safety set through Hamilton-Jacobi reachability.
This step transforms current-state safety evaluation into a forward-looking safety representation that accounts for future vehicle-state evolution and disturbances.
To avoid the computational burden of grid-based HJ solvers in high-dimensional vehicle dynamics, the reachability value function is approximated from real-vehicle extreme driving data using offline reinforcement learning.

Third, the learned motion safety set is incorporated into safe reinforcement learning for emergency collision avoidance.
The reachability value function is used to construct the safety cost in a constrained Markov decision process.
A PID-Lagrangian SAC algorithm is then used to optimize the control policy, so that the resulting collision-avoidance maneuver remains within the learned motion safety boundary.

The following sections present the details of each module.
Section~\ref{sec:modeling} establishes the vehicle dynamics model and the unified motion safety constraint.
Section~\ref{sec:hj_reachability} formulates the reachability-based motion safety set and its offline approximation.
Section~\ref{sec:safe_rl_control} introduces the motion-safety-set guided safe reinforcement learning algorithm.
Section~\ref{sec:validation} further validates the proposed framework through simulation and real-vehicle experiments.

\section{Unified Modeling of Vehicle Dynamics and Safety Constraints Under Extreme Driving Conditions}
\label{sec:modeling}

\subsection{Vehicle Dynamics Modeling}

\begin{figure}[t]
\centering
\includegraphics[width=0.65\columnwidth]{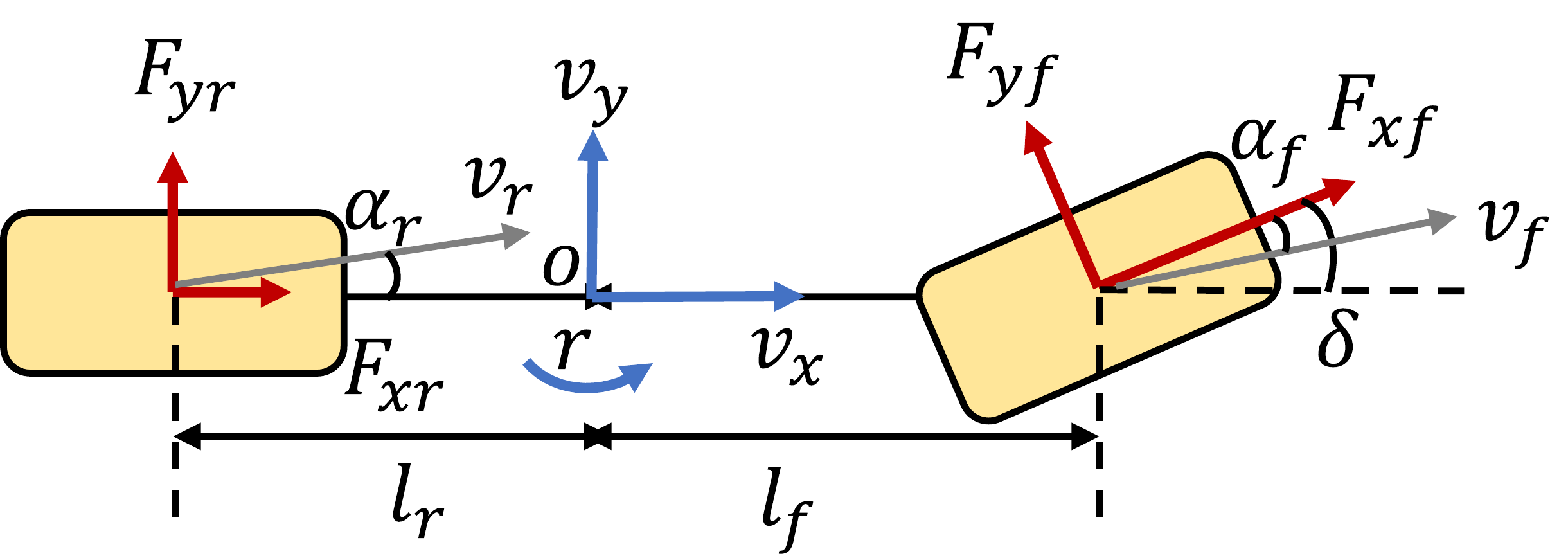}
\caption{Nonlinear vehicle dynamics model.}
\label{fig:vehicle_dynamics}
\end{figure}

To accurately capture the complex dynamics of a vehicle under extreme driving conditions,
we adopt a nonlinear vehicle dynamics model as shown in Fig.~\ref{fig:vehicle_dynamics}.
The vehicle is modeled as a 3-DOF bicycle model with the Fiala tire model.
The longitudinal, lateral translation and yaw motion of the vehicle body can be expressed as:
\begin{align}
m(\dot{v}_x - v_y r) &= F_{xf} \cos\delta_f - F_{yf} \sin\delta_f + F_{xr} - F_{\text{aero}} \label{eq:longitudinal} \\
m(\dot{v}_y + v_x r) &= F_{yf} \cos\delta_f + F_{xf} \sin\delta_f + F_{yr} \label{eq:lateral} \\
I_z \dot{r} &= l_f (F_{yf} \cos\delta_f + F_{xf} \sin\delta_f) - l_r F_{yr} \label{eq:yaw_3dof}
\end{align}
where $m$ is the total vehicle mass, $I_z$ is the yaw moment of inertia, $v_x$ and $v_y$ are the longitudinal and lateral velocities, $r$ is the yaw rate, $\delta_f$ is the front steering angle, $l_f$ and $l_r$ are the distances from the center of mass to the front and rear axles, $F_{xf}$ and $F_{xr}$ are the longitudinal tire forces, $F_{yf}$ and $F_{yr}$ are the lateral tire forces, and $F_{\text{aero}}$ is the aerodynamic drag force.

The front and rear tire slip angles are defined as:
\begin{align}
\alpha_f &= \delta_f - \arctan\left(\frac{v_y + l_f r}{v_x}\right) \label{eq:alpha_f} \\
\alpha_r &= -\arctan\left(\frac{v_y - l_r r}{v_x}\right) \label{eq:alpha_r}
\end{align}

During driving maneuvers, the vertical load transfers between the front and rear axles due to inertial forces.
The front and rear vertical loads are computed as:
\begin{align}
F_{zf} &= \frac{m g l_r - m_s (\dot{v}_x - v_y r) h_{\text{cg}}}{l_f + l_r} \label{eq:Fzf} \\
F_{zr} &= \frac{m g l_f + m_s (\dot{v}_x - v_y r) h_{\text{cg}}}{l_f + l_r} \label{eq:Fzr}
\end{align}
where $m_s$ is the sprung mass, $g$ is the gravitational acceleration, and $h_{\text{cg}}$ is the height of the center of gravity.

The tire forces are modeled using the Fiala tire model, which captures the nonlinear transition from elastic deformation to full sliding saturation.
Defining the critical slip angle $\alpha_{sl} = \arctan(3F_y^{\max}/C_\alpha)$ as the boundary between regimes, the lateral force is:
\begin{subequations}
\label{eq:fiala_3dof}
\begin{align}
F_y &= -C_\alpha \tan\alpha + \frac{C_\alpha^2}{3F_y^{\max}} |\tan\alpha| \tan\alpha \notag \\
&\quad - \frac{C_\alpha^3}{27(F_y^{\max})^2} \tan^3\alpha, \quad |\alpha| \leq \alpha_{sl} \label{eq:fiala_elastic} \\
F_y &= -F_y^{\max}\operatorname{sgn}(\alpha), \quad |\alpha| > \alpha_{sl} \label{eq:fiala_saturated}
\end{align}
\end{subequations}
where $C_\alpha$ is the cornering stiffness, and the maximum lateral force is limited by the friction circle constraint $F_y^{\max} = \sqrt{(\mu F_z)^2 - F_x^2}$ with road adhesion coefficient $\mu$, vertical load $F_z$, and longitudinal force $F_x$.

\subsection{External Environment Constraints Modeling for Collision Avoidance}
\label{sec:geometric_constraint}

Under emergency collision-avoidance maneuvers, the environmental safety constraint is mainly determined by the relative geometry between the ego vehicle and surrounding obstacles.
Circular or point-mass approximations may distort the true occupied region of road vehicles, especially in aggressive driving scenarios,
causing the collision constraints to be overly conservative or insufficiently accurate for high-performance decision-making and control.
Therefore, both the ego vehicle and the obstacle are modeled as oriented rectangles.
As illustrated in Fig.~\ref{fig:sat}, the separating axis theorem (SAT) is applied to convert the binary collision test into a continuous signed geometric safety margin.

\begin{figure}[t]
\centering
\includegraphics[width=0.8\columnwidth]{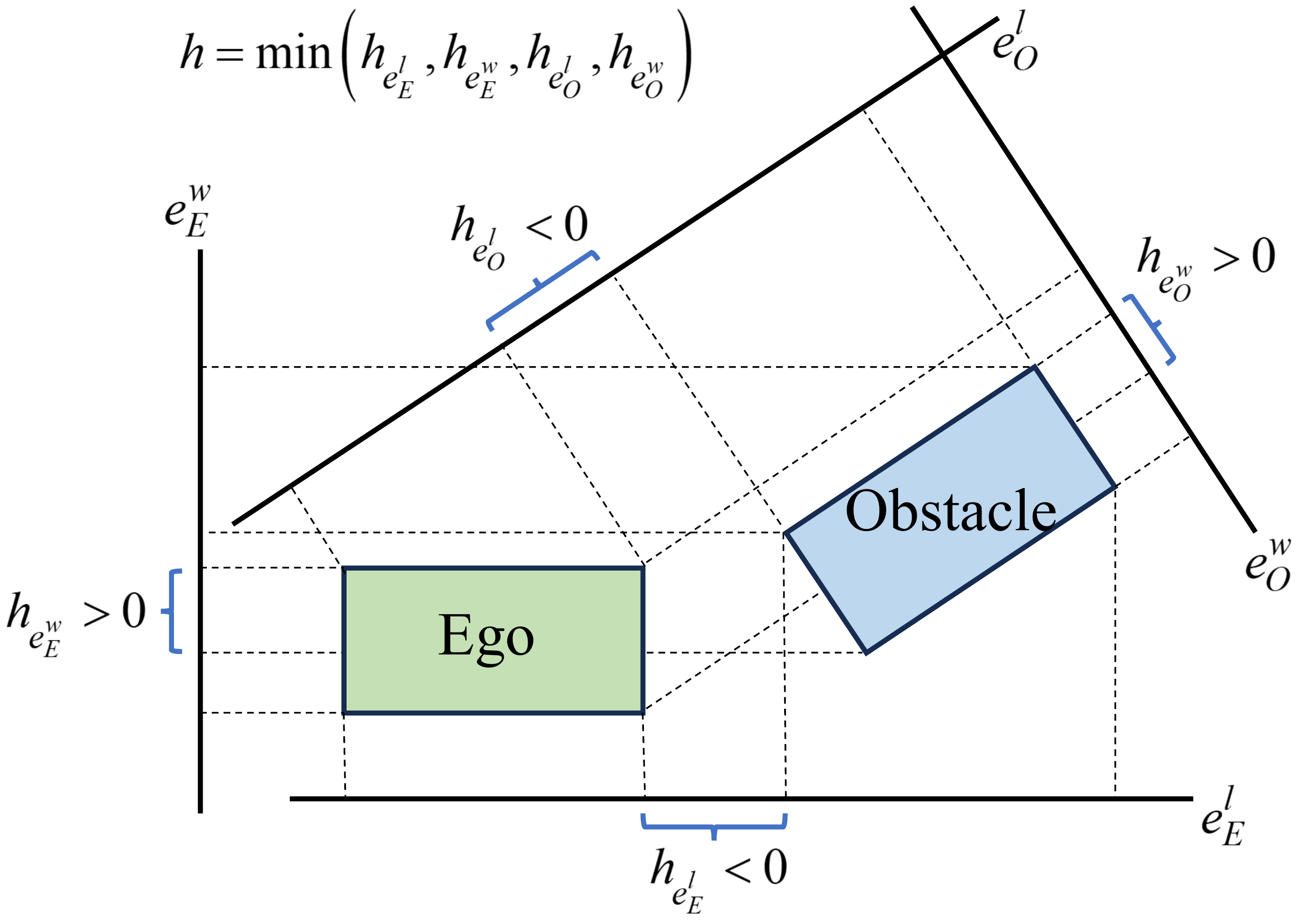}
\caption{Illustration of the SAT-based collision detection and signed safety margin.}
\label{fig:sat}
\end{figure}

For each candidate axis $a\in\mathcal{A}=\{e_E^l,e_E^w,e_O^l,e_O^w\}$, the signed overlap margin between the ego vehicle and the obstacle is defined as $h_a(\mathbf{x}) = \rho_E(a)+\rho_O(a) - |a^\top(c_E-c_O)|$ (see Appendix~\ref{app:sat} for detailed derivation).
Since collision-free separation is certified by the existence of at least one separating axis, the global environmental safety function is:
\begin{equation}
h_{\mathrm{env}}(\mathbf{x}) = \min_{a\in\mathcal{A}} h_a(\mathbf{x}).
\label{eq:henv}
\end{equation}

This function is continuous, piecewise smooth, and locally Lipschitz, satisfying $h_{\mathrm{env}}(\mathbf{x}) < 0$ when collision-free, $h_{\mathrm{env}}(\mathbf{x}) > 0$ when in collision, and $h_{\mathrm{env}}(\mathbf{x}) = 0$ on the contact boundary.

\subsection{Internal Vehicle Stability Constraints Analysis and Modeling}

Under extreme driving conditions, the vehicle may approach its physical limits in terms of both lateral responsiveness and yaw stability.
In such cases, traditional linear assumptions become invalid, and the tire dynamics enter a highly nonlinear regime due to saturation effects and load transfer.
To model and constrain the vehicle's safe handling region, we define a lateral-yaw stability envelope in the $\beta$--$r$ phase plane.
Inspired by \cite{li2022integrated}, the chassis stability envelope is constructed by combining a linearized tire dynamics model with phase-plane partitioning in the $\beta$--$r$ space.

Under unsaturated conditions, the Fiala model in \eqref{eq:fiala_3dof} can be linearized, yielding the equivalent cornering stiffness $\bar{C}_\alpha = \partial F_y / \partial \alpha|_{\alpha}$ that characterizes the tire's instantaneous responsiveness to steering inputs.
By assuming constant $v_x$ and reducing the 3-DOF model to a 2-DOF lateral-yaw subsystem, the yaw-sideslip dynamics can be expressed as:
\begin{equation}
\label{eq:state_space}
\begin{bmatrix} \dot{\beta} \\ \dot{r} \end{bmatrix} = \mathbf{A} \begin{bmatrix} \beta \\ r \end{bmatrix} + \mathbf{B} \delta_f
\end{equation}
where the system matrix $\mathbf{A}$ depends on the equivalent cornering stiffnesses $\bar{C}_{\alpha f}$, $\bar{C}_{\alpha r}$, vehicle parameters, and longitudinal speed (see Appendix~\ref{app:stability} for the full expression).

Applying the Routh-Hurwitz criterion to \eqref{eq:state_space}, the stability condition reduces to:
\begin{equation}
\label{eq:stability_boundary}
\bar{C}_{\alpha f} \bar{C}_{\alpha r} (l_f + l_r)^2 + m v_x^2 (\bar{C}_{\alpha r} l_r - \bar{C}_{\alpha f} l_f) > 0
\end{equation}
which partitions the $(\beta\text{-}r)$ phase plane into stable and unstable regions, as illustrated in Fig.~\ref{fig:stability_envelope}.

\begin{figure}[t]
\centering
\includegraphics[width=0.9\columnwidth]{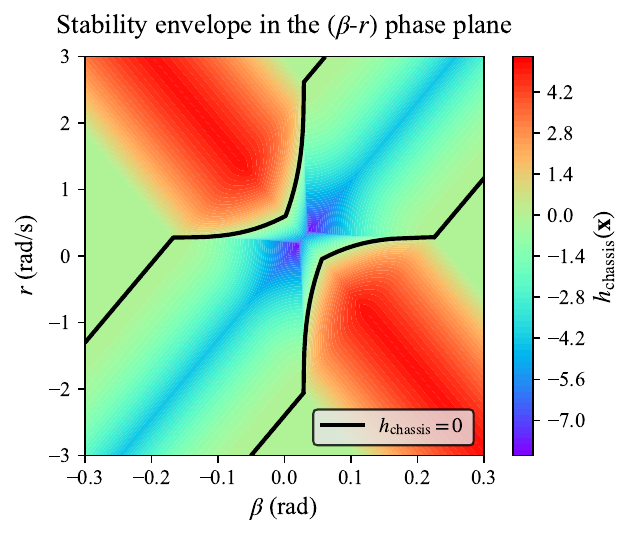}
\caption{Stability envelope in the $(\beta\text{-}r)$ phase plane at a certain state.}
\label{fig:stability_envelope}
\end{figure}

Based on this condition, a signed chassis safety function is constructed:
\begin{equation}
\label{eq:chassis_sign}
\constraint_{\text{chassis}}(\mathbf{x}) = -\bigl[\bar{C}_{\alpha f} \bar{C}_{\alpha r} (l_f + l_r)^2 + m v_x^2 (\bar{C}_{\alpha r} l_r - \bar{C}_{\alpha f} l_f)\bigr]
\end{equation}
This function defines a signed stability margin in the lateral-yaw phase plane, satisfying $\constraint_{\text{chassis}}(\mathbf{x}) < 0$ when the vehicle state lies inside the stable handling region, $\constraint_{\text{chassis}}(\mathbf{x}) > 0$ when it enters the unstable region, and $\constraint_{\text{chassis}}(\mathbf{x}) = 0$ on the stability boundary.

\subsection{Unified Motion Safety Constraints}

The environment constraint defined in Section~\ref{sec:geometric_constraint} and the chassis-level constraint defined above are combined to form a unified motion safety constraint.
For numerical consistency, the two signed safety functions are first normalized by their respective positive characteristic scaling factors:
\begin{equation}
\label{eq:normalize}
\bar{\constraint}_{\text{env}}(\mathbf{x})=\frac{\constraint_{\text{env}}(\mathbf{x})}{s_{\text{env}}},\qquad
\bar{\constraint}_{\text{chassis}}(\mathbf{x})=\frac{\constraint_{\text{chassis}}(\mathbf{x})}{s_{\text{chassis}}}
\end{equation}
where $\mathbf{x} \in \R^n$ denotes the vehicle state vector, and $s_{\text{env}}>0$ and $s_{\text{chassis}}>0$ are chosen to bring both constraints to a comparable order of magnitude.
Since the scaling factors are strictly positive, the sign and zero-level boundary of each constraint are preserved.
The unified motion safety constraint is then constructed as:
\begin{equation}
\label{eq:smooth_max}
\constraint(\mathbf{x}) =
\frac{1}{\kappa}\log\!\left(
e^{\kappa\bar{\constraint}_{\text{env}}(\mathbf{x})}
+ e^{\kappa\bar{\constraint}_{\text{chassis}}(\mathbf{x})}
\right)
\end{equation}
where $\kappa>0$ is the smoothing parameter.
This softmax-type construction provides a smooth approximation of the maximum operator while preserving the dominant safety constraint.
The unified motion safety set constrained by both environment and chassis dynamics is defined as:
\begin{equation}
\label{eq:unified_safety_set}
\safeSet = \bigl\{\mathbf{x} \in \R^n \mid \constraint(\mathbf{x}) \leq 0\bigr\}
\end{equation}

This region represents the vehicle's admissible state set under the combined effect of external collision-avoidance constraints and internal stability constraints, 
with its boundary jointly reflecting kinematic drivability and dynamic stability.
However, due to the non-convex and coupled nature of the constraints, it becomes challenging to perform safety analysis under extreme driving conditions and to provide reliable guidance for decision-making and control in autonomous driving systems.
Therefore, a generalizable reachability-based framework is introduced in the next section, which leverages Hamilton-Jacobi based reachability theory and offline reinforcement learning to approximate the motion safety set under realistic vehicle driving conditions.

\section{Modeling and Offline Reinforcement Learning-Based Approximation of Motion Safety Sets via Hamilton-Jacobi Reachability}
\label{sec:hj_reachability}

\subsection{Modeling of Motion Safety Sets Based on Hamilton-Jacobi Reachability}

To evaluate the vehicle's safety under aggressive maneuvers and potential disturbances, this section builds upon the unified constraint function to formulate a long-term motion safety framework.
Recalling the safety set defined in \eqref{eq:unified_safety_set},
this set represents the instantaneous safety status of the vehicle.
Under extreme driving conditions, long-term safety depends on the vehicle's future evolution.
Therefore, it is necessary to construct a motion safety model that incorporates both forward evolution and disturbances.

A vehicle dynamic system subject to disturbances can be expressed as:
\begin{equation}
\label{eq:disturbed_system}
\dot{\mathbf{x}}(t) = f\bigl(\mathbf{x}(t), \mathbf{u}(t), \mathbf{d}(t)\bigr), \quad \mathbf{u}(t) \in \mathcal{U}, \quad \mathbf{d}(t) \in \mathcal{D}
\end{equation}
where $\mathcal{U} \subset \R^m$ denotes the control input set, and $\mathcal{D} \subset \R^{m_d}$ denotes the disturbance set caused by model uncertainties or road excitation.
Given an initial state $\mathbf{x}_0 \in \mathcal{X}$ and a time horizon $[0,t]$, the system trajectory evolves as:
\begin{equation}
\label{eq:trajectory}
\mathbf{x}(\tau) = \zeta(\tau; 0, \mathbf{x}_0, \mathbf{u}(\cdot), \mathbf{d}(\cdot)), \quad \tau \in [0,t]
\end{equation}
where $\zeta(\cdot)$ denotes the state trajectory of the system under the joint influence of control and disturbance inputs.

\begin{definition}[Backward Reachable Tube]
\label{def:brt}
The backward reachable tube (BRT) of system \eqref{eq:disturbed_system} at time $t$ is defined as:
\begin{multline}
\label{eq:brt}
\mathcal{B}(t) = \bigl\{\mathbf{x}_0 \mid \exists\, \mathbf{d}(\cdot),\; \forall\, \mathbf{u}(\cdot),\; \exists\, \tau \in [0,t]:\\
\zeta(\tau; 0, \mathbf{x}_0, \mathbf{u}(\cdot), \mathbf{d}(\cdot)) \in \unsafeSet\bigr\}
\end{multline}
\end{definition}

If the system starts from an initial state $\mathbf{x}_0 \in \mathcal{B}(t)$, and there exists a disturbance signal such that no matter which control is applied, the trajectory eventually enters the unsafe set $\unsafeSet$, then $\mathbf{x}_0 \in \mathcal{B}(t)$.
This set characterizes states that are guaranteed to violate the safety constraints within the future time horizon.

\begin{definition}[Motion Safety Set]
\label{def:safety_set}
Based on the backward reachable tube, the finite-horizon motion safety set is defined as its complement:
\begin{multline}
\label{eq:safety_set}
\mathcal{S}(t) = \bigl(\mathcal{B}(t)\bigr)^c = \bigl\{\mathbf{x}_0 \mid \exists\, \mathbf{u}(\cdot),\; \forall\, \mathbf{d}(\cdot),\; \forall\, \tau \in [0,t]:\\
\zeta(\tau; 0, \mathbf{x}_0, \mathbf{u}(\cdot), \mathbf{d}(\cdot)) \in \safeSet\bigr\}
\end{multline}
\end{definition}

That is, starting from state $\mathbf{x}_0$, if there exists a control signal such that the system remains within $\safeSet$ over the entire interval $[0,t]$ in the presence of admissible disturbance signals, then $\mathbf{x}_0 \in \mathcal{S}(t)$.

Since the backward reachable tube $\mathcal{B}(t)$ expands over time, the motion safety set shrinks monotonically.
In practice, a sufficiently long-term horizon can approximate the infinite-horizon safety set:
\begin{equation}
\label{eq:infinite_safety}
\mathcal{S}_\infty = \lim_{t \to \infty} \mathcal{S}(t)
\end{equation}

To convert the motion safety set $\mathcal{S}(t)$ into a form that is computationally tractable, differentiable and safety-informative, we formulate a forward Hamilton-Jacobi (HJ) value function based on the constraint function $h(\mathbf{x})$.

\begin{definition}[Hamilton-Jacobi Value Function]
\label{def:hj_value}
The HJ value function is defined as the solution of the following Hamilton-Jacobi-Isaacs (HJI) PDE:
\begin{equation}
\label{eq:hj_value}
\begin{cases}
V(0, \mathbf{x}) = h(\mathbf{x}) \\
\displaystyle \frac{\partial V(t, \mathbf{x})}{\partial t} + \max_{\mathbf{d} \in \mathcal{D}} \min_{\mathbf{u} \in \mathcal{U}} \nabla_{\mathbf{x}} V(t, \mathbf{x})^\top f(\mathbf{x}, \mathbf{u}, \mathbf{d}) = 0, \quad t \in [0, T]
\end{cases}
\end{equation}
\end{definition}

Equation~\eqref{eq:hj_value} describes the dynamic propagation of constraint violations under admissible control and disturbance inputs.
Based on the HJ value function, the backward reachable tube and motion safety set are expressed as:
\begin{align}
\mathcal{B}(t) &= \bigl\{\mathbf{x} \in \mathcal{X} \mid V(t, \mathbf{x}) > 0\bigr\} \label{eq:brt_hj} \\
\mathcal{S}(t) &= \bigl\{\mathbf{x} \in \mathcal{X} \mid V(t, \mathbf{x}) \leq 0\bigr\} \label{eq:safety_hj}
\end{align}

This means that $V(t, \mathbf{x}) \leq 0$ implies the existence of a control strategy starting from state $\mathbf{x}$ that ensures $h(\mathbf{x}) < 0$ is satisfied at all future times.

In practical engineering applications, systems typically operate over a finite time horizon, and solving the infinite-horizon problem numerically is often intractable.
Therefore, a finite time horizon $T$ can be selected based on the characteristics of the specific driving scenario.
This time domain broadly covers key dynamic processes of the system, such as the complete time required for a vehicle to avoid an approaching hazard.
The engineering safety set is then defined as:
\begin{equation}
\label{eq:eng_safety}
\mathcal{S}_{\mathrm{eng}} = \mathcal{S}(T) = \bigl\{\mathbf{x} \in \mathcal{X} \mid V(T, \mathbf{x}) \leq 0\bigr\}
\end{equation}

Although the continuous-time HJ value function provides an analytic and differential representation of the motion safety set, its numerical solution suffers from the curse of dimensionality in high-dimensional state spaces.

To this end, we extend the problem to a discrete-time system and introduce the reachability value function $V_h^\pi$ and action-value function $Q_h^\pi$.

Assume sampling interval is $h$, the discrete time steps are $k = 0, 1, \ldots, N$, and the system is described as:
\begin{equation}
\label{eq:discrete_system}
\mathbf{x}_{k+1} = f(\mathbf{x}_k, \mathbf{u}_k, \mathbf{d}_k)
\end{equation}

In the discrete-time system, define the reachability value and action-value functions under policy $\pi$:
\begin{align}
V_h^\pi(\mathbf{x}) &= \max_{0 \leq s \leq N} h(\mathbf{x}_s) \;\big|\; \mathbf{x}_0 = \mathbf{x} \label{eq:disc_value} \\
Q_h^\pi(\mathbf{x}, \mathbf{u}) &= \max_{0 \leq s \leq N} h(\mathbf{x}_s) \;\big|\; \mathbf{x}_0 = \mathbf{x},\; \mathbf{u}_0 = \mathbf{u} \label{eq:disc_q}
\end{align}
Here, $V_h^\pi(\mathbf{x})$ defines the maximum constraint violation from state $\mathbf{x}$ under a given policy $\pi$, and $Q_h^\pi(\mathbf{x}, \mathbf{u})$ further quantifies safety performance under a specific maneuver $\mathbf{u}$.

Furthermore, the policy-optimal reachability value and action-value functions are:
\begin{align}
V_h^*(\mathbf{x}) &= \inf_{\pi} V_h^\pi(\mathbf{x}) \label{eq:opt_value} \\
Q_h^*(\mathbf{x}, \mathbf{u}) &= \inf_{\pi} Q_h^\pi(\mathbf{x}, \mathbf{u}) \label{eq:opt_q}
\end{align}

$V_h^*(\mathbf{x}) \leq 0$ implies the existence of a policy such that the system satisfies $h(\mathbf{x}_k) \leq 0$ throughout the finite horizon.

Furthermore, the finite-horizon motion safety set under continuous dynamics is approximately equivalent to the optimal safety set under discrete-time dynamics with engineering precision:
\begin{equation}
\label{eq:continuous_discrete_equiv}
\mathcal{S}(T) = \bigl\{\mathbf{x} \in \mathcal{X} \mid V(T, \mathbf{x}) \leq 0\bigr\} \approx \bigl\{\mathbf{x} \in \mathcal{X} \mid V_h^*(\mathbf{x}) \leq 0\bigr\} = \mathcal{S}_\pi
\end{equation}

The rigorous proof of consistency and convergence between the continuous and discrete safety sets is provided in~\cite{mitchell2005hj}.
This equivalence establishes that the discrete-time optimal reachability value function $V_h^*$ serves as a valid approximation of the continuous-time safety set, thereby enabling the use of data-driven methods for safety set computation.

\subsection{Construction of the Discounted Bellman Operator}

Although Section~\ref{sec:hj_reachability}-A has established that the motion safety set can be exactly characterized by the HJ reachability value function, grid-based numerical methods suffer from the curse of dimensionality.
To address this, we develop an offline RL framework that directly approximates $V_h^*$ and $Q_h^*$ from offline data via a reachability Bellman operator, without requiring online interaction or explicit policy construction.

The optimal reachability action-value function satisfies the following Bellman recursion:
\begin{equation}
\label{eq:bellman_q}
Q_h^*(\mathbf{x}, \mathbf{u}) = \sup_{\mathbf{d} \in \mathcal{D}} \max\biggl[\constraint(\mathbf{x}),\; \inf_{\mathbf{u}' \in \mathcal{U}} Q_h^*\bigl(f(\mathbf{x}, \mathbf{u}, \mathbf{d}), \mathbf{u}'\bigr)\biggr]
\end{equation}

where $\constraint(\mathbf{x})$ represents the instantaneous constraint violation and the second term characterizes the future violation under disturbance $\mathbf{d}$, mitigated by optimal action selection $\mathbf{u}'$.
The optimal reachability value function is:
\begin{equation}
\label{eq:bellman_v}
V_h^*(\mathbf{x}) = \inf_{\mathbf{u} \in \mathcal{U}} Q_h^*(\mathbf{x}, \mathbf{u})
\end{equation}

Equations \eqref{eq:bellman_q} and \eqref{eq:bellman_v} form the Bellman equations for the discrete-time reachability problem, where $\sup$ captures the \emph{worst-case} disturbance, $\inf$ reflects the controller's objective, and $\max$ accounts for current constraint violations.

However, the original Bellman equation is nonsmooth and lacks contraction properties.
Following~\cite{fisac2019bridging}, a discount factor $\gamma \in (0,1)$ is introduced to construct a discounted operator that ensures convergence while preserving structural consistency.

\begin{definition}[Discounted Bellman Operator]
\label{def:discounted_bellman}
For any bounded reachability action-value function $Q$, define discounted Bellman operator $\mathcal{P}^*$:
\begin{multline}
\label{eq:discounted_bellman}
(\mathcal{P}^* Q)(\mathbf{x}, \mathbf{u}) := (1-\gamma)\, h(\mathbf{x}) \\
+ \gamma \sup_{\mathbf{d} \in \mathcal{D}} \max\biggl[
h(\mathbf{x}),\;
\inf_{\mathbf{u}' \in \mathcal{U}}
Q\bigl(f(\mathbf{x}, \mathbf{u}, \mathbf{d}), \mathbf{u}'\bigr)
\biggr]
\end{multline}
\end{definition}

The corresponding policy-optimal discounted reachability action-value function is defined as the fixed-point solution:
\begin{equation}
\label{eq:q_fixed_point}
Q_{h,\gamma}^* = \mathcal{P}^* Q_{h,\gamma}^*
\end{equation}

The associated value function is:
\begin{equation}
\label{eq:v_gamma}
V_{h,\gamma}^*(\mathbf{x}) = \min_{\mathbf{u} \in \mathcal{U}} Q_{h,\gamma}^*(\mathbf{x}, \mathbf{u})
\end{equation}

As the discount factor $\gamma \to 1$, the discounted equation converges to the original continuous-time HJ reachability equation.

\begin{theorem}[Contraction Property of the Discounted Reachability Bellman Operator]
\label{thm:contraction}
On the space of bounded reachability action-value functions $\mathcal{F} = \{Q: \mathcal{X} \times \mathcal{U} \to \R \mid \|Q\|_\infty < \infty\}$, the discounted reachability Bellman operator $\mathcal{P}^*$ satisfies:
\begin{equation}
\label{eq:contraction}
\|\mathcal{P}^* Q_1 - \mathcal{P}^* Q_2\|_\infty \leq \gamma \|Q_1 - Q_2\|_\infty, \quad \forall\, Q_1, Q_2 \in \mathcal{F}
\end{equation}
\end{theorem}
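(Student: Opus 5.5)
The plan is to prove \eqref{eq:contraction} pointwise. Fix $(\mathbf{x},\mathbf{u})\in\mathcal{X}\times\mathcal{U}$ and write out the difference $(\mathcal{P}^*Q_1)(\mathbf{x},\mathbf{u})-(\mathcal{P}^*Q_2)(\mathbf{x},\mathbf{u})$ from Definition~\ref{def:discounted_bellman}. The affine term $(1-\gamma)h(\mathbf{x})$ is identical for $Q_1$ and $Q_2$, so it cancels. What remains is $\gamma$ times the difference of the two terms $\sup_{\mathbf{d}}\max[h(\mathbf{x}),\inf_{\mathbf{u}'}Q_i(f(\mathbf{x},\mathbf{u},\mathbf{d}),\mathbf{u}')]$. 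It therefore suffices to show that the map $Q\mapsto \sup_{\mathbf{d}}\max[h(\mathbf{x}),\inf_{\mathbf{u}'}Q(\cdot)]$ is non-expansive in the sup norm. Taking the supremum over $(\mathbf{x},\mathbf{u})$ then gives the result.

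I will establish non-expansiveness through three elementary inequalities, each valid for bounded real-valued families:
\begin{itemize}
\item For any index set $I$ and functions $a,b$ on $I$, $|\inf_{i}a_i-\inf_i b_i|\le \sup_i|a_i-b_i|$. The same bound holds with $\sup$ in place of $\inf$.
\item For fixed $c\in\R$, $|\max(c,a)-\max(c,b)|\le|a-b|$.
\end{itemize}
For the first inequality, note that $a_i\le b_i+\sup_j|a_j-b_j|$ for every $i$. Taking the infimum (or supremum) over $i$ on both sides and then exchanging the roles of $a$ and $b$ gives the bound. The second follows the same way, or by a short case check.

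Set $\mathbf{y}=f(\mathbf{x},\mathbf{u},\mathbf{d})$. The inner infimum then satisfies $|\inf_{\mathbf{u}'}Q_1(\mathbf{y},\mathbf{u}')-\inf_{\mathbf{u}'}Q_2(\mathbf{y},\mathbf{u}')|\le\|Q_1-Q_2\|_\infty$. Applying the $\max$ with $h(\mathbf{x})$ preserves this bound, and so does the outer $\sup_{\mathbf{d}\in\mathcal{D}}$. Multiplying by $\gamma$ gives $|(\mathcal{P}^*Q_1-\mathcal{P}^*Q_2)(\mathbf{x},\mathbf{u})|\le\gamma\|Q_1-Q_2\|_\infty$.

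The only step needing care is well-definedness. I must check that $\mathcal{P}^*$ maps $\mathcal{F}$ into $\mathcal{F}$, so that all the infima and suprema are finite and the differences make sense. This requires $h$ to be bounded on $\mathcal{X}$, which I will assume, for instance by restricting to a compact operating region where the continuous $h$ from \eqref{eq:smooth_max} is bounded. Under that assumption, $\|\mathcal{P}^*Q\|_\infty\le(1-\gamma)\|h\|_\infty+\gamma\max(\|h\|_\infty,\|Q\|_\infty)<\infty$. No measurability or attainment of the extrema is needed, because the argument uses only $\inf$ and $\sup$.

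The Banach fixed-point theorem then yields existence and uniqueness of $Q_{h,\gamma}^*$ in \eqref{eq:q_fixed_point}. It also gives convergence of iterating $\mathcal{P}^*$, which is the practical consequence of the theorem.
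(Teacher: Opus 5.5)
Your proof is correct and follows the same route as the paper's appendix: cancel the $(1-\gamma)h(\mathbf{x})$ term pointwise, then chain the non-expansiveness of $\inf_{\mathbf{u}'}$, of $z\mapsto\max\{h(\mathbf{x}),z\}$, and of $\sup_{\mathbf{d}}$, multiply by $\gamma$, and take the supremum over $(\mathbf{x},\mathbf{u})$. Your extra check that $\mathcal{P}^*$ maps $\mathcal{F}$ into itself, which does require $h$ to be bounded, is left implicit in the paper; it is needed for the Banach fixed-point step, while the pointwise contraction inequality holds without it.
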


\emph{Proof:} See Appendix~\ref{app:contraction}.

Since $\mathcal{P}^*$ is a $\gamma$-contraction mapping, the Banach fixed-point theorem guarantees the existence and uniqueness of $Q_{h,\gamma}^* = \mathcal{P}^* Q_{h,\gamma}^*$, and the value iteration converges geometrically.
These results establish the discounted reachability Bellman operator as a convergent framework for iterative value-function approximation.
The next subsection presents the practical offline RL algorithm that instantiates this framework to learn $Q_{h,\gamma}^*$ and $V_{h,\gamma}^*$ from extreme-driving data.

\subsection{Reachability Value Function Approximation Based on Offline Reinforcement Learning}

Following the theoretical analysis of the discounted reachability Bellman operator, practical approximation of reachability value functions in high-dimensional vehicle dynamics is considered.
Due to the prohibitive complexity of grid-based HJ solvers and the absence of online interaction under extreme driving conditions, the problem is formulated within an offline reinforcement learning framework that directly approximates $Q_{h,\gamma}^*(\mathbf{x},\mathbf{u})$ and $V_{h,\gamma}^*(\mathbf{x}) = \min_{\mathbf{u} \in \mathcal{U}} Q_{h,\gamma}^*(\mathbf{x},\mathbf{u})$.
Unlike conventional policy-optimization approaches, the proposed method performs pure value-function regression under the HJ reachability operator, using large-scale extreme driving data.

The training of the reachability action-value function $Q_{h,\gamma}$ can be interpreted as an offline regression toward the fixed point $Q_{h,\gamma}^*$ of the discounted reachability operator.
At each iteration, transition tuples $(\mathbf{x}, \mathbf{u}, h(\mathbf{x}), \mathbf{x}')$ are sampled from the offline dataset to construct the discounted reachability temporal-difference target:
\begin{equation}
\label{eq:td_target}
y_{Q,\gamma}(\mathbf{x}, \mathbf{u}) = (1-\gamma)\, h(\mathbf{x}) + \gamma \max\bigl\{h(\mathbf{x}),\; V_{h,\gamma}(\mathbf{x}')\bigr\}
\end{equation}

The reachability action-value function is updated by minimizing the temporal-difference error, where $\ell(\cdot)$ denotes the squared loss function:
\begin{equation}
\label{eq:q_loss}
\mathcal{L}_{Q_{h,\gamma}} = \mathbb{E}_{(\mathbf{x}, \mathbf{u}, \mathbf{x}') \sim \mathcal{D}}\bigl[\ell\bigl(y_{Q,\gamma}(\mathbf{x}, \mathbf{u}) - Q_{h,\gamma}(\mathbf{x}, \mathbf{u})\bigr)\bigr]
\end{equation}

After updating $Q_{h,\gamma}$, it is projected to the state value function $V_{h,\gamma}$.
Since directly minimizing over all actions suffers from out-of-distribution optimistic bias, reversed expectile regression (RER) is employed as a left-tail-sensitive regression to suppress unrealistically low action-value estimates.
The update loss is:
\begin{equation}
\label{eq:v_loss}
\mathcal{L}_{V_{h,\gamma}} = \mathbb{E}_{(\mathbf{x},\mathbf{u})\sim\mathcal{D}}\bigl[L_{\text{rev}}^{\tau}\bigl(Q_{h,\gamma}(\mathbf{x},\mathbf{u}) - V_{h,\gamma}(\mathbf{x})\bigr)\bigr]
\end{equation}
where the reversed expectile regression loss is given by:
\begin{equation}
\label{eq:rer_loss}
L_{\text{rev}}^{\tau}(\xi) = |\tau_{\text{rev}} - \mathbb{I}(\xi > 0)| \,\xi^2, \quad \tau_{\text{rev}} \in (0.5,1)
\end{equation}
Here, $\xi = Q_{h,\gamma}(\mathbf{x},\mathbf{u}) - V_{h,\gamma}(\mathbf{x})$ denotes the value residual.

Since smaller reachability values indicate safer states, $\xi < 0$ is equivalent to $Q_{h,\gamma} < V_{h,\gamma}$ and corresponds to a safer action value; this case is weighted by $\tau_{\text{rev}} > 0.5$, pulling $V_{h,\gamma}$ downward toward safer actions.
Conversely, $\xi > 0$ is equivalent to $Q_{h,\gamma} > V_{h,\gamma}$ and corresponds to a riskier action value; this case is weighted by $1-\tau_{\text{rev}} < 0.5$, weakening its upward influence on $V_{h,\gamma}$.
Therefore, as $\tau_{\text{rev}} \to 1$, $V_{h,\gamma}$ increasingly tracks the minimum action value across the data distribution, approaching the true reachability value function $\min_{\mathbf{u}} Q_{h,\gamma}$.

An alternating training algorithm is constructed to jointly optimize $Q_{h,\gamma}$ and $V_{h,\gamma}$, as summarized in Table~\ref{tab:algorithm}.
At each iteration, the state value function is first fixed to construct the TD target for updating $Q_{h,\gamma}$; then with $Q_{h,\gamma}$ fixed, $V_{h,\gamma}$ is updated by minimizing the RER loss.
This procedure can be interpreted as a conservative variant of Fitted Q Iteration (FQI) under the reachability Bellman operator.
For numerical stability, independent MLP architectures are adopted for $Q_{h,\gamma}$ and $V_{h,\gamma}$ with normalized inputs.
The Adam optimizer and gradient clipping are applied to mitigate oscillations.
A boundary-aware sample reweighting strategy assigns higher weights to samples with small $|V_{h,\gamma}(\mathbf{x})|$, focusing the regression on states near the safety boundary where $V_{h,\gamma}(\mathbf{x}) \approx 0$.

\begin{table}[t]
\caption{Offline Reachability Value Function Learning}
\label{tab:algorithm}
\centering
\begin{tabular}{p{0.95\columnwidth}}
\toprule
\textbf{Input:} Offline dataset $\mathcal{D}$, discount factor $\gamma$, RER parameter $\tau_{\text{rev}}$, learning rate $\eta$ \\
\midrule
Randomly initialize the parameters of the action-value network $Q_{h,\gamma}$ and the state-value network $V_{h,\gamma}$ \\
\textbf{For} $i = 1$ to $N_{\text{iter}}$ \\
\quad Randomly sample a mini-batch from the offline dataset $\mathcal{D}$ \\
\quad Apply boundary-aware sample reweighting: $w(\mathbf{x}) = 1 / (|V_{h,\gamma}(\mathbf{x})| + \epsilon_w)$ \\
\quad Construct the discounted reachability target: \\
\quad $y_{Q,\gamma} = (1-\gamma)\, h(\mathbf{x}) + \gamma \max\bigl\{h(\mathbf{x}),\; V_{h,\gamma}(\mathbf{x}')\bigr\}$ \\
\quad Compute the temporal-difference loss: $\mathcal{L}_{Q_{h,\gamma}}$ \\
\quad Update $Q_{h,\gamma}$ using learning rate $\eta$ \\
\quad Perform RER-based projection to update $V_{h,\gamma}$: \\
\quad\quad Compute state-action residual: $\xi = Q_{h,\gamma}(\mathbf{x}, \mathbf{u}) - V_{h,\gamma}(\mathbf{x})$ \\
\quad\quad Define asymmetric RER weight: $w_{\tau} = |\tau_{\text{rev}} - \mathbb{I}(\xi > 0)|$ \\
\quad\quad Compute state-value regression loss: $\mathcal{L}_{V_{h,\gamma}}$ \\
\quad\quad Update $V_{h,\gamma}$ using learning rate $\eta$ \\
\textbf{End for} \\
\textbf{Output:} Trained $Q_{h,\gamma}$ and $V_{h,\gamma}$; construct motion safety set \\
\bottomrule
\end{tabular}
\end{table}

\section{Safe Reinforcement Learning for Collision Avoidance with Motion Safety Set Constraints}
\label{sec:safe_rl_control}

\subsection{Safe Reinforcement Learning Problem Formulation}
\label{sec:safe_rl_formulation}

Standard reinforcement learning formulates the decision-making problem as a Markov decision process (MDP) defined by the tuple $(\mathcal{O}, \mathcal{A}, P, r, \gamma_r)$, where $\mathcal{O}$ is the observation space, $\mathcal{A}$ is the action space, $P$ is the transition dynamics, $r$ is the reward function, and $\gamma_r \in (0,1)$ is the discount factor.
The objective is to find an optimal policy $\pi^*$ that maximizes the expected cumulative reward $J_r(\pi) = \mathbb{E}_{\pi}\bigl[\sum_{t=0}^{\infty} \gamma_r^t r_t\bigr]$.
However, in safety-critical tasks such as autonomous driving collision avoidance, addressing safety solely through reward shaping is insufficient, as it lacks formal safety guarantees to prevent catastrophic outcomes.

To incorporate safety requirements, the emergency collision avoidance problem is reformulated as a constrained Markov decision process (CMDP) $(\mathcal{O}, \mathcal{A}, P, r, c, \gamma_r, \gamma_c)$, where $c: \mathcal{O} \to [0,1]$ is a bounded safety cost function, $\gamma_c \in (0,1)$ is the cost discount factor, and $d_c > 0$ is the acceptable cost threshold.
The safe RL objective is then:
\begin{equation}
\label{eq:cmdp}
\begin{aligned}
\pi^* = \arg\max_{\pi} \; &J_r(\pi) \\
\text{s.t.} \quad &J_c(\pi) = \mathbb{E}_{\pi}\biggl[\sum_{t=0}^{\infty} \gamma_c^t c(\mathbf{o}_t)\biggr] \leq d_c
\end{aligned}
\end{equation}
where $J_c(\pi)$ denotes the expected cumulative safety cost, and the constraint requires that the long-term cost remains below the threshold $d_c$.

In this work, the learned motion safety set from Section~\ref{sec:hj_reachability} is leveraged to construct a unified, forward-looking continuous safety cost function $c$, as detailed in Section~\ref{sec:safe_rl_env}.

\subsection{Safe RL Environment Design}
\label{sec:safe_rl_env}

To instantiate the CMDP formulation above, a safe reinforcement learning environment is designed for emergency collision-avoidance scenarios, building upon the vehicle dynamics model and motion safety set established in Sections~\ref{sec:modeling}--\ref{sec:hj_reachability}.

\textbf{Scenario Description.}
The training scenario considers forward obstacle avoidance on a straight road under low-adhesion conditions.
The ego vehicle travels at a reference longitudinal speed $v_{\text{ref}}$ along a road segment with lateral boundaries $Y \in [Y_{\min}, Y_{\max}]$.
An oriented rectangular obstacle, parameterized by position $\mathbf{p}_{\text{obs}}$, length $l_{\text{obs}}$, width $w_{\text{obs}}$, and heading $\psi_{\text{obs}}$, is placed along the road.
The ego vehicle must execute a collision-avoidance maneuver while maintaining lateral stability and staying within road boundaries.
Fig.~\ref{fig:scenario} illustrates the scenario configuration.
The collision detection between the ego vehicle and obstacles follows the SAT-based signed safety margin $h_{\mathrm{env}}(\mathbf{x})$ defined in \eqref{eq:henv}, and the road boundary constraint is evaluated based on the ego vehicle's oriented bounding box extent.

\begin{figure}[t]
\centering
\includegraphics[width=0.8\columnwidth]{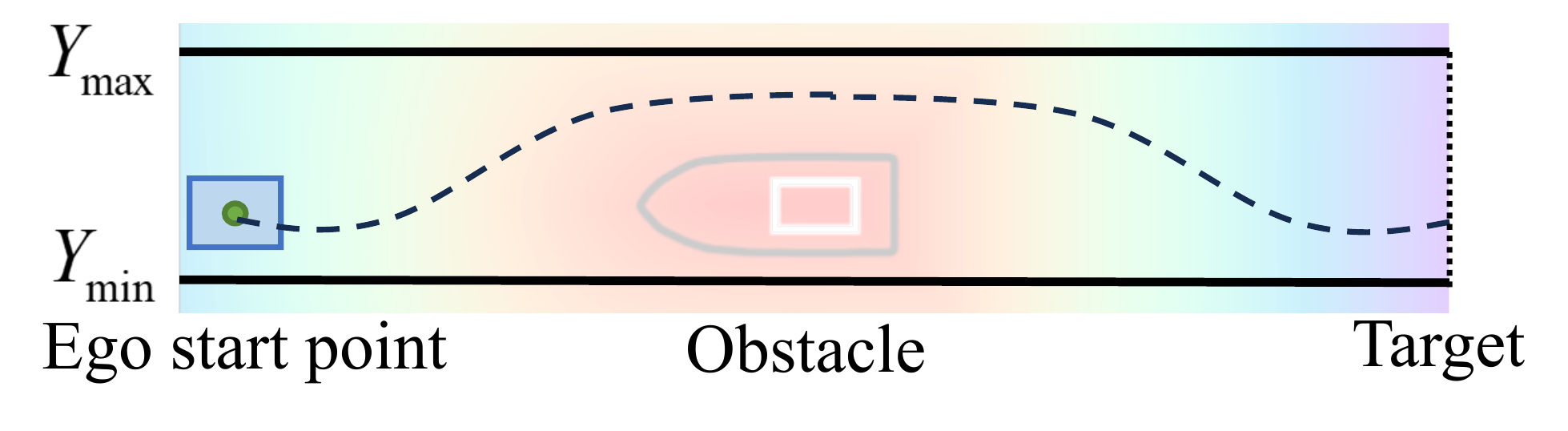}
\caption{Illustration of the forward obstacle avoidance scenario.}
\label{fig:scenario}
\end{figure}

\textbf{Observation and Action Spaces.}
The observation space is defined as:
\begin{equation}
\label{eq:rl_state}
\mathbf{o}_t = \bigl[v_x,\; v_y,\; r,\; X,\; Y,\; \psi,\; \Delta x_{\text{obs}},\; \Delta y_{\text{obs}},\; \Delta \psi_{\text{obs}}\bigr]^\top \in \mathcal{O}
\end{equation}
where $v_x$ and $v_y$ are the longitudinal and lateral velocities, $r$ is the yaw rate, $X$ and $Y$ denote the global position of the vehicle center of mass, $\psi$ is the yaw angle, and $\Delta x_{\text{obs}}$, $\Delta y_{\text{obs}}$, $\Delta \psi_{\text{obs}}$ denote the relative position and heading of the obstacle with respect to the ego vehicle body frame.

The action space consists of a normalized steering command:
\begin{equation}
\label{eq:rl_action}
a_t = \delta_{\text{norm}} \in [-1,\; 1]
\end{equation}
which maps to the physical front steering angle via $\delta_f = \delta_{\text{norm}} \cdot \delta_{\max}$.
The actual steering angle is filtered through a first-order inertial model to account for steering actuator dynamics and ensure physically feasible commands.
The longitudinal speed is regulated by a low-level speed controller at the reference velocity $v_{\text{ref}}$, so that the policy focuses on learning lateral collision-avoidance maneuvers.

\textbf{Reward Design.}
The reward function consists of a dense step reward and terminal bonuses/penalties.
The step reward is:
\begin{equation}
\label{eq:rl_reward}
r_t^{\text{step}} = w_1 - w_2 Y^2 - w_3 \psi^2 - w_4 (\Delta\delta_f)^2
\end{equation}
where $w_1$, $w_2$, $w_3$, $w_4$ are weighting coefficients.
The first term provides a constant survival bonus, while the remaining terms penalize lateral deviation from the road centerline, heading errors, and steering jerk, respectively.
Upon episode termination, the terminal reward is:
\begin{equation}
\label{eq:rl_terminal}
r^{\text{term}} = \begin{cases}
\bigl[R_{\text{target}} - w_y Y_T^2 - w_\psi \psi_T^2\bigr]_+ & \text{target reached} \\
-p_{\text{col}} & \text{collision} \\
-p_{\text{bnd}} & \text{boundary violation} \\
-p_{\text{yaw}} & \text{excessive heading}
\end{cases}
\end{equation}
where $R_{\text{target}}$ is the base target bonus, $Y_T$ and $\psi_T$ are the terminal lateral offset and heading, and $p_{\text{col}}$, $p_{\text{bnd}}$, $p_{\text{yaw}}$ are fixed penalty coefficients for collision, boundary violation, and excessive heading deviation respectively.

\textbf{Safety Cost Function.}
Rather than relying solely on instantaneous or locally defined safety indicators such as geometric collision flags or stability measures, the learned motion safety set is incorporated into the CMDP as the safety cost.
The reachability value function $V_{h,\gamma}$ characterizes the control-optimized upper bound of future constraint violation under disturbances, providing a forward-looking criterion for evaluating whether the current state may evolve into an unsafe region.
Therefore, the safety cost is designed as a normalized violation cost:
\begin{equation}
\label{eq:safety_constraint}
c(\mathbf{o}_t)
= \operatorname{clip}\left(
\frac{V_{h,\gamma}(\mathbf{o}_t)}{\epsilon_V},
\,0,\,1
\right), \quad \epsilon_V > 0 .
\end{equation}
here, $\epsilon_V$ controls the value range over which the violation cost increases from zero to one.
The resulting cost has the following interpretation:
\[
c(\mathbf{o}_t)=
\begin{cases}
0, & V_{h,\gamma}(\mathbf{o}_t) \leq 0,\\
\dfrac{V_{h,\gamma}(\mathbf{o}_t)}{\epsilon_V},
& 0 < V_{h,\gamma}(\mathbf{o}_t) < \epsilon_V,\\
1, & V_{h,\gamma}(\mathbf{o}_t) \geq \epsilon_V .
\end{cases}
\]
The cost therefore remains zero for all safe states, increases continuously only after the state leaves the learned motion safety set, and saturates at one for large positive reachability values.

The motion-safety-set based formulation offers important advantages over conventional hand-crafted constraints.
First, $V_{h,\gamma}$ provides a unified safety representation that incorporates both external collision risk and internal chassis instability into a single criterion, avoiding the use of multiple separately designed constraints during policy optimization.
Second, by encoding future safety evolution under disturbances and admissible control responses, $V_{h,\gamma}$ defines a forward-looking safety boundary for constrained policy optimization.
The clipped continuous cost also preserves violation severity information that would be discarded by a binary indicator, allowing the cost critic to distinguish slight safety-set violations from severe unsafe states.

\subsection{Safe Policy Optimization}
\label{sec:policy_optimization}

With the safety cost function defined in \eqref{eq:safety_constraint}, the constrained policy optimization problem in \eqref{eq:cmdp} is solved using a PID-Lagrangian Soft Actor-Critic framework.
The safety constraint is incorporated into the maximum-entropy objective as
\begin{equation}
\label{eq:lagrangian}
\max_{\theta} \min_{\lambda \geq 0} \bigl[ J_{\alpha}(\pi_\theta) - \lambda (J_c(\pi_\theta) - d_c) \bigr]
\end{equation}
where $J_{\alpha}$ is the entropy-regularized return, $J_c$ is the expected discounted safety cost, and $d_c$ is the prescribed safety threshold.

The overall architecture is shown in Fig.~\ref{fig:sac_lag}.
\begin{figure}[t]
\centering
\includegraphics[width=\columnwidth]{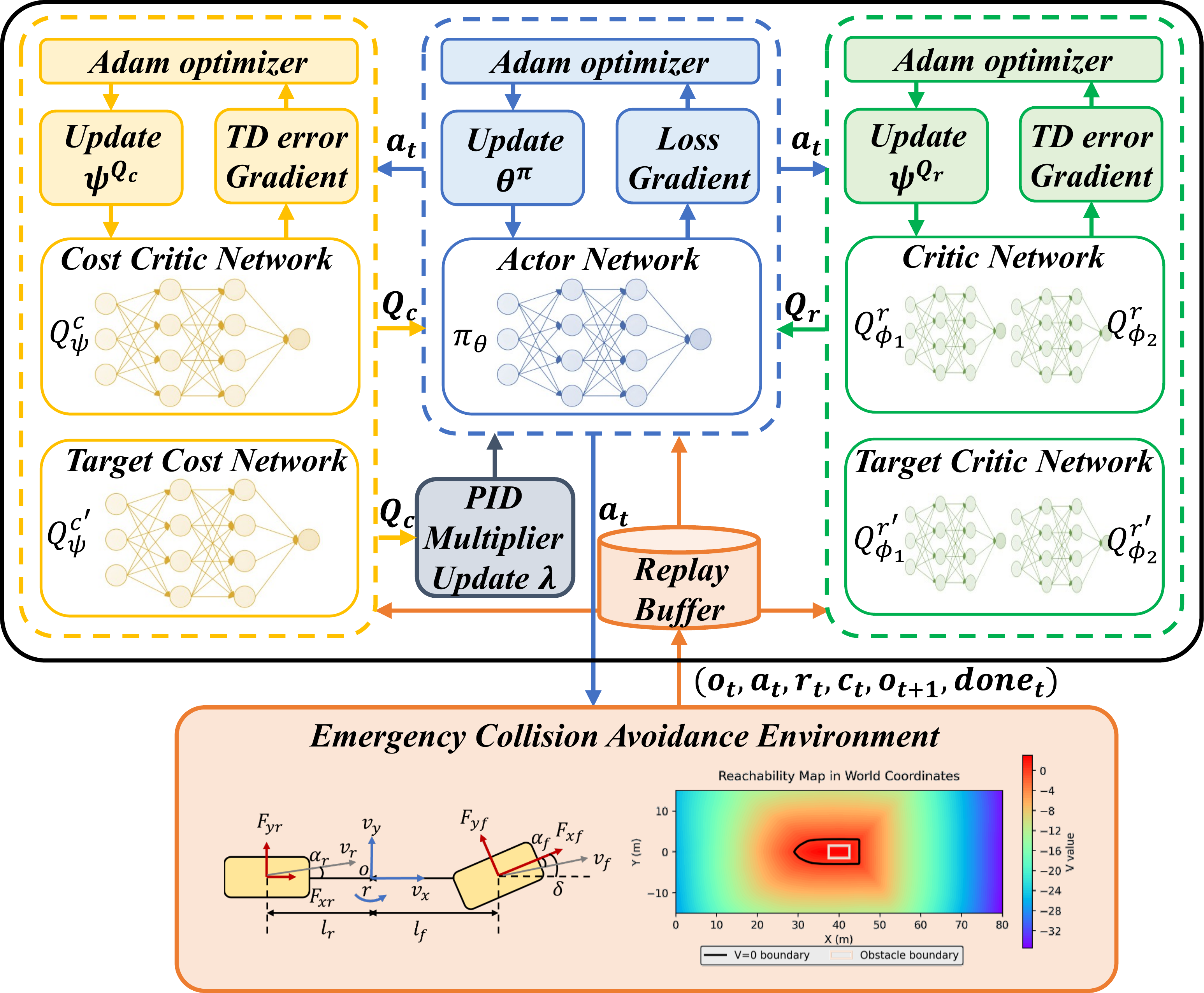}
\caption{Network architecture of the PID-Lagrangian SAC algorithm. The actor $\pi_\theta$ outputs a Gaussian action distribution with tanh squashing. Clipped double reward critics $Q^r_{\phi_1}, Q^r_{\phi_2}$ and a single cost critic $Q^c_{\psi}$ are each paired with target networks updated via soft averaging. A scalar Lagrange multiplier $\lambda$ is adaptively adjusted via a PID controller to enforce the safety constraint.}
\label{fig:sac_lag}
\end{figure}

It consists of a stochastic actor $\pi_\theta$, clipped double reward critics $Q^r_{\phi_1}, Q^r_{\phi_2}$, a cost critic $Q^c_\psi$, and their target networks.
The reward critics and entropy temperature are updated following standard SAC, while the cost critic estimates the discounted cumulative motion-safety cost:
\begin{equation}
\label{eq:qc_def}
Q^c(\mathbf{o}_t, a_t) = \mathbb{E}_{\pi} \biggl[\sum_{k=t}^{\infty} \gamma_c^{k-t} c_k\biggr]
\end{equation}
where $c_k = c(\mathbf{o}_k)$ is defined in \eqref{eq:safety_constraint}.
Its Bellman target is
\begin{equation}
\label{eq:qc_target}
y_t^c = c_t + \gamma_c Q^c_{\bar{\psi}}(\mathbf{o}_{t+1}, a_{t+1}), \quad a_{t+1} \sim \pi_\theta(\cdot|\mathbf{o}_{t+1})
\end{equation}
and $Q^c_\psi$ is fitted by minimizing the corresponding squared Bellman error.

The actor is optimized by minimizing the Lagrangian SAC objective:
\begin{equation}
\label{eq:policy_loss}
\mathcal{L}_{\pi}(\theta) = \mathbb{E}\bigl[\alpha \log \pi_\theta(a_t|\mathbf{o}_t) - \min_{i=1,2} Q^r_{\phi_i}(\mathbf{o}_t, a_t) + \lambda Q^c_\psi(\mathbf{o}_t, a_t)\bigr]
\end{equation}

Thus, the policy is encouraged to maximize task return while penalizing actions with large predicted safety costs.

Instead of the naive gradient-based multiplier update, a PID-Lagrangian controller~\cite{stooke2020responsive} is used to regulate $\lambda$ according to the safety constraint violation.
Let
\begin{equation}
\label{eq:cost_error}
e_t = \hat{J}_c^{(t)} - d_c
\end{equation}
where $\hat{J}_c^{(t)}$ is the exponential moving average of the minibatch cost estimates.
The multiplier is updated as
\begin{equation}
\label{eq:pid_lambda}
\lambda \leftarrow \bigl[K_p \tilde{e}_t + I_t + K_d D_t\bigr]_+
\end{equation}
where $[\cdot]_+$ denotes the projection onto $\lambda \geq 0$, with
\begin{align}
\tilde{e}_t &= \alpha_p \tilde{e}_{t-1} + (1-\alpha_p) e_t \label{eq:pid_p} \\
I_t &= \max(0, I_{t-1} + K_i e_t) \label{eq:pid_i} \\
D_t &= \max(0, \hat{J}_c^{(t)} - \hat{J}_c^{(t-\ell)}). \label{eq:pid_d}
\end{align}
here, $\tilde{e}_t$ is the exponential moving average of the error with smoothing coefficient $\alpha_p$, and $\ell$ is the delay horizon for the derivative term.
The proportional, integral, and derivative terms respond to the current violation magnitude, accumulated violation, and cost trend, respectively, which improves transient response and reduces overshoot in safety-constraint regulation.

\section{Validation and Results}
\label{sec:validation}

The proposed framework is validated on the forward obstacle avoidance scenario described in Section~\ref{sec:safe_rl_env}.
The validation procedure consists of three stages: (i) real-vehicle extreme driving data are collected and used to approximate the motion safety set through offline reinforcement learning; (ii) the learned motion safety set is incorporated as the safety cost function, and the PID-Lagrangian SAC policy is trained in simulation with comparative and ablation studies; (iii) the trained policy is deployed on a production-level passenger vehicle to evaluate its collision-avoidance performance in real-vehicle experiments.

\subsection{Motion Safety Set Approximation and Visualization}
\label{sec:result_safety_set}

To support offline reinforcement learning--based motion safety set approximation for the forward obstacle avoidance scenario, a high-quality offline dataset was constructed from real-vehicle extreme driving experiments.
Each data sample is a tuple $(\mathbf{x}, \mathbf{u}, \mathbf{x}', h)$, where $\mathbf{x}$ and $\mathbf{x}'$ denote the current and next reachability states $\mathbf{x}_{\text{reach}} = [v_x,\; v_y,\; r,\; \Delta x_{\text{obs}},\; \Delta y_{\text{obs}},\; \Delta \psi_{\text{obs}}]^\top \in \mathbb{R}^6$ selected from the observation vector in \eqref{eq:rl_state}, $\mathbf{u}$ consists of the front steering angle, driving or braking torque, and $h(\mathbf{x})$ is the unified signed safety constraint value defined in \eqref{eq:smooth_max}.

\begin{figure}[t]
\centering
\includegraphics[width=0.9\columnwidth]{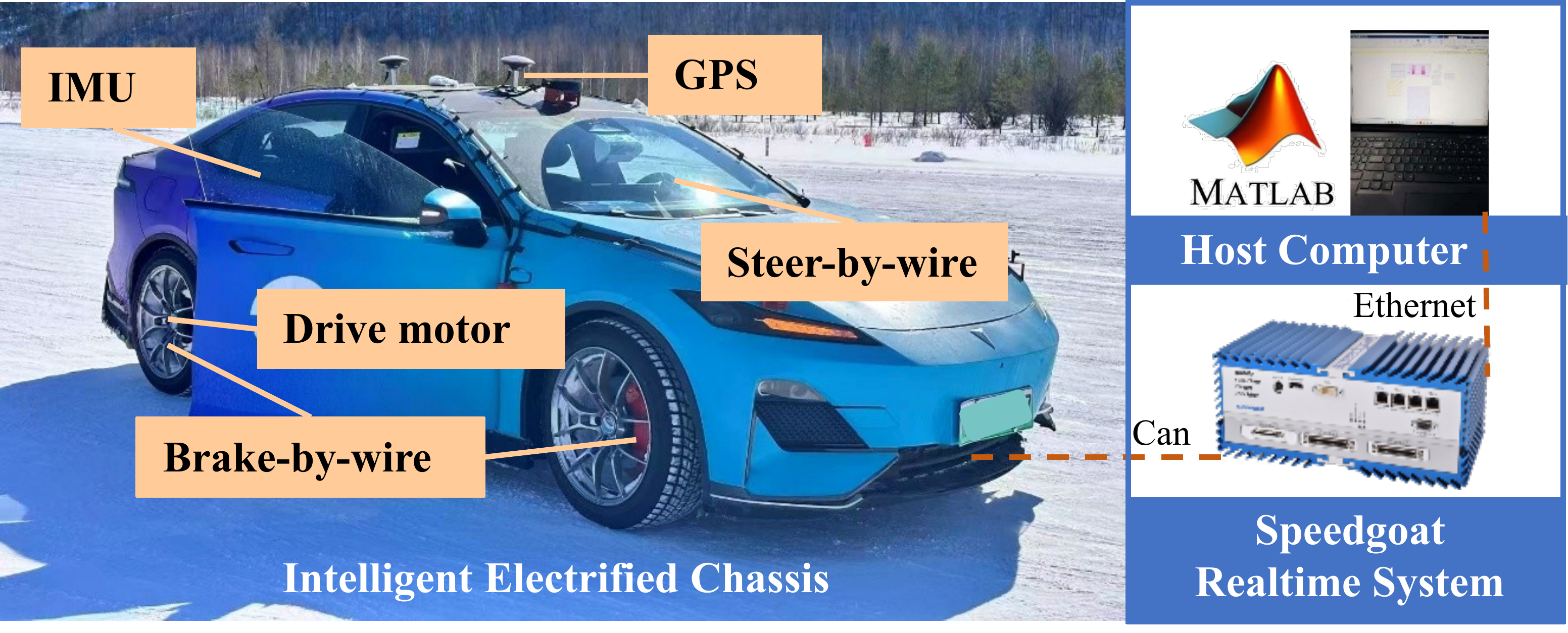}
\caption{The Changan Deepal SL03 production-level passenger vehicle used as the experimental platform.}
\label{fig:test_vehicle}
\end{figure}

A Changan Deepal SL03 passenger vehicle (Fig.~\ref{fig:test_vehicle}) equipped with onboard IMU and GPS sensors was employed as the experimental data acquisition platform, whose key parameters are listed in Table~\ref{tab:vehicle_params}.

\begin{table}[t]
\centering
\caption{Key vehicle parameters of the experimental platform (Changan Deepal SL03).}
\label{tab:vehicle_params}
\begin{tabular}{lcc}
\hline
\textbf{Parameter} & \textbf{Symbol} & \textbf{Value} \\
\hline
Total mass & $m$ & 2178 kg \\
Yaw moment of inertia & $I_z$ & 3216 kg$\cdot$m$^2$ \\
CG-to-front-axle distance & $l_f$ & 1.526 m \\
CG-to-rear-axle distance & $l_r$ & 1.374 m \\
CG height & $h_{\text{cg}}$ & 0.175 m \\
Tire model & --- & PS4 245/45 R19 \\
\hline
\end{tabular}
\end{table}
Data collection focuses on the forward obstacle avoidance scenario, where a stationary rectangular obstacle is placed ahead of the ego vehicle on a straight road.
Five professional test drivers conducted more than 300 collision-avoidance experiments under varying initial velocities, initial relative distances, and control constraint conditions.
During these experiments, the system recorded vehicle state variables and control signals, yielding a diverse set of extreme collision-avoidance maneuvering data.
After time synchronization and filtering, the unified safety constraint $h(\mathbf{x})$ was computed for each sample.
The states and actions were then normalized, and the data were further augmented through mirror transformation and random translation, resulting in approximately 4.4 million tuples $(\mathbf{x}, \mathbf{u}, \mathbf{x}', h)$, with approximately 65\% non-collision and 35\% collision trajectories.

\begin{figure}[t]
\centering
\includegraphics[width=\columnwidth]{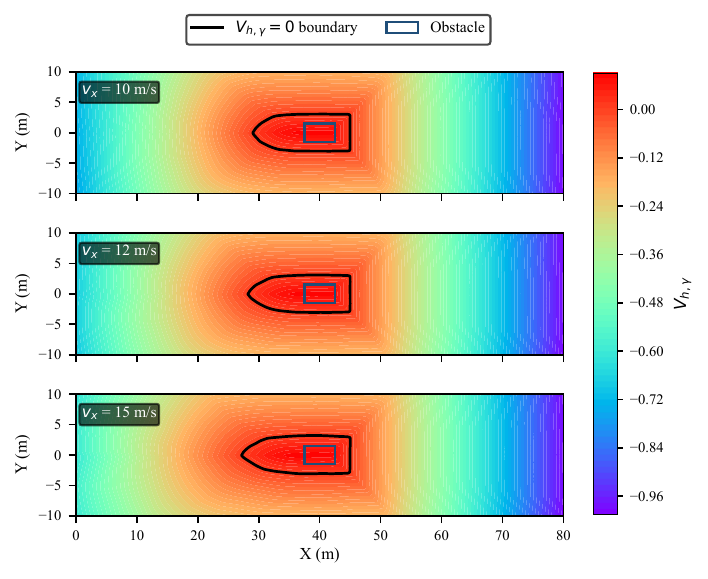}
\caption{Comparison of motion safety set boundaries under different longitudinal velocities ($\Delta\psi_{\text{obs}} = 0^{\circ}$). As $v_x$ increases, the backward reachable unsafe set expands mainly ahead of the obstacle due to increased forward momentum and longer braking distance, and equivalently the motion safety set shrinks.}
\label{fig:reachability_vx}
\end{figure}

\begin{figure}[t]
\centering
\includegraphics[width=\columnwidth]{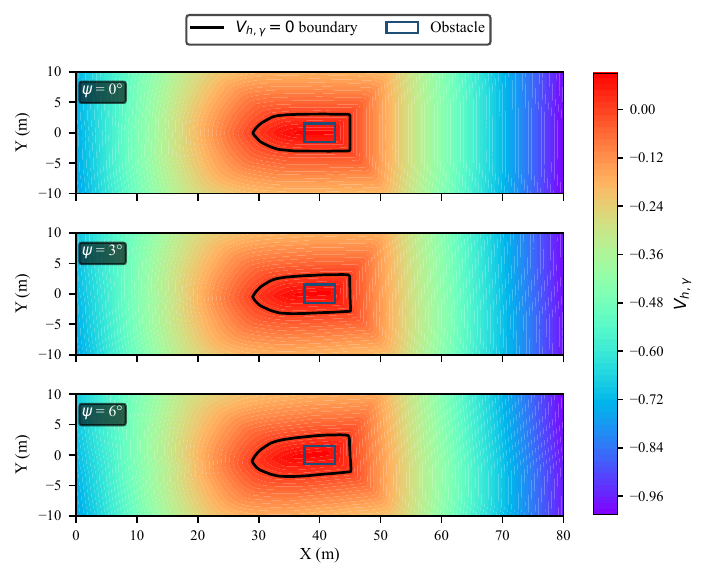}
\caption{Comparison of motion safety set boundaries under different relative yaw angles $\Delta\psi_{\text{obs}}$ ($v_x = 12$ m/s). Nonzero $\Delta\psi_{\text{obs}}$ introduces asymmetric coupling between lateral and yaw dynamics, causing the backward reachable unsafe set to deform and expand.}
\label{fig:reachability_yaw}
\end{figure}

The offline reachability value function learning algorithm described in Section~\ref{sec:hj_reachability} was applied to the collected dataset to approximate the discounted reachability value function $V_{h,\gamma}$.
The network architecture adopts independent MLPs for $Q_{h,\gamma}$ and $V_{h,\gamma}$, each with three hidden layers of 256 units and ReLU activations.
The discount factor is set to $\gamma = 0.99$, the RER parameter to $\tau_{\text{rev}} = 0.8$, and the learning rate to $\eta = 3 \times 10^{-4}$ with Adam optimizer and gradient clipping at $1.0$.
Training was conducted for 8 million iterations with a mini-batch size of 4096.

Both the ego vehicle and the obstacle are modeled as oriented rectangles with length $L = 5$ m and width $W = 3$ m for collision detection.
Since the reachability state space is six-dimensional, direct visualization of the motion safety set $\{x \mid V_{h,\gamma}(x) \leq 0\}$ is intractable.
Instead, two-dimensional slices are extracted by projecting onto the relative position plane $(\Delta x_{\text{obs}}, \Delta y_{\text{obs}})$, revealing how the safety boundary responds to key dynamic parameters.

Figs.~\ref{fig:reachability_vx} and~\ref{fig:reachability_yaw} present the motion safety set under different longitudinal velocities $v_x$ and relative yaw angles $\Delta\psi_{\text{obs}}$, respectively, at a road adhesion coefficient of $\mu = 0.3$, with $v_y = 0$ and $r = 0$ fixed in all panels.
In each panel, the outer black contour denotes the learned motion safety set boundary $V_{h,\gamma} = 0$, and the inner dark blue contour represents the physical obstacle boundary.
The color map represents the reachability value $V_{h,\gamma}$: darker colors indicate safer states with larger negative values, while regions approaching red correspond to higher reachability values, meaning smaller collision avoidance margins or more severe collision intrusion.

\textbf{Effect of Longitudinal Velocity.}
Fig.~\ref{fig:reachability_vx} compares the motion safety set at $v_x = 10$, $12$, and $15$ m/s with $\Delta\psi_{\text{obs}} = 0^{\circ}$ fixed.
The black contour represents the boundary of the backward reachable unsafe region induced by the obstacle and vehicle dynamics.
At low speed ($v_x = 10$ m/s), the unsafe reachable region remains relatively close to the physical obstacle boundary, indicating sufficient maneuverability for collision avoidance.
As $v_x$ increases to $12$ and $15$ m/s, this unsafe reachable region expands mainly in the longitudinal direction, and equivalently the admissible motion safety set, defined as its complement, shrinks.

\textbf{Effect of Relative Yaw Angle.}
Fig.~\ref{fig:reachability_yaw} shows the safety boundary under different $\Delta\psi_{\text{obs}}$ at the same longitudinal velocity.
Since the obstacle heading is fixed at zero, varying $\Delta\psi_{\text{obs}}$ is equivalent to varying the ego vehicle yaw angle.
At $\Delta\psi_{\text{obs}} = 0^{\circ}$, the set is approximately symmetric about the longitudinal axis.
As $|\Delta\psi_{\text{obs}}|$ increases, the boundary contracts asymmetrically on the side toward the heading direction, due to the coupling between lateral displacement and yaw dynamics that biases the effective steering authority.

Across all conditions, the motion safety set boundary consistently encloses a smaller region than the geometrically collision-free space.
Increasing longitudinal velocity leads to longitudinal contraction of the safe region due to greater braking distance, while nonzero relative yaw angle induces rotational and asymmetric deformation of the safety boundary.
These results confirm that the learned motion safety set captures forward-looking risk under disturbances that purely geometric proximity measures cannot reveal.

\subsection{Safe RL Training and Comparative Analysis}
\label{sec:result_safe_rl}

To evaluate the effectiveness of the proposed PID-Lagrangian SAC with motion safety set (MSS) guidance, three configurations are compared:

\begin{enumerate}
\item \textbf{PID-Lagrangian SAC with MSS (PID-Lag-SAC w/ MSS)}: The proposed method, where the learned motion safety set is used through the continuous safety cost $c$ and the Lagrange multiplier is regulated by a PID controller.
\item \textbf{PID-Lagrangian SAC without MSS (PID-Lag-SAC w/o MSS)}: The PID-Lagrangian structure is retained, but the reachability value function in the continuous safety cost is replaced by the instantaneous normalized unified safety constraint $\constraint(\mathbf{x}_t)$, thereby removing the forward-looking reachability criterion.
\item \textbf{Lagrangian SAC with MSS (Lag-SAC w/ MSS)}: The learned motion safety set is retained through $c$, but the PID controller is replaced by a standard gradient-based Lagrangian multiplier update $\lambda \leftarrow [\lambda + \eta_\lambda (J_c - d_c)]_+$.
\end{enumerate}

\textbf{Training Configuration.}
All three methods use the same SAC backbone and training hyperparameters.
The actor, reward critics, and cost critic are implemented as MLPs with two hidden layers of 256 units and ReLU activations.
The discount factors are set to $\gamma = \gamma_c = 0.99$, the learning rate is $3 \times 10^{-4}$ with Adam optimizer, the safety cost threshold is $d_c = 1.0$, and each method is trained for 1000 epochs with a batch size of 256.
The simulation is conducted in the forward obstacle-avoidance environment described in Section~\ref{sec:safe_rl_env} under a low-adhesion condition of $\mu = 0.3$.
Following the geometric modeling in Section~\ref{sec:result_safety_set}, both the ego vehicle and the obstacle are represented as oriented rectangles with length $L = 5$ m and width $W = 3$ m.
The stationary obstacle is placed at $(40,0)$ m with zero heading, and the post-avoidance target is set to $X_{\text{target}} = 80~\mathrm{m}$, requiring the policy to bypass the obstacle and return to the original lane.

\begin{figure}[t]
\centering
\includegraphics[width=\columnwidth]{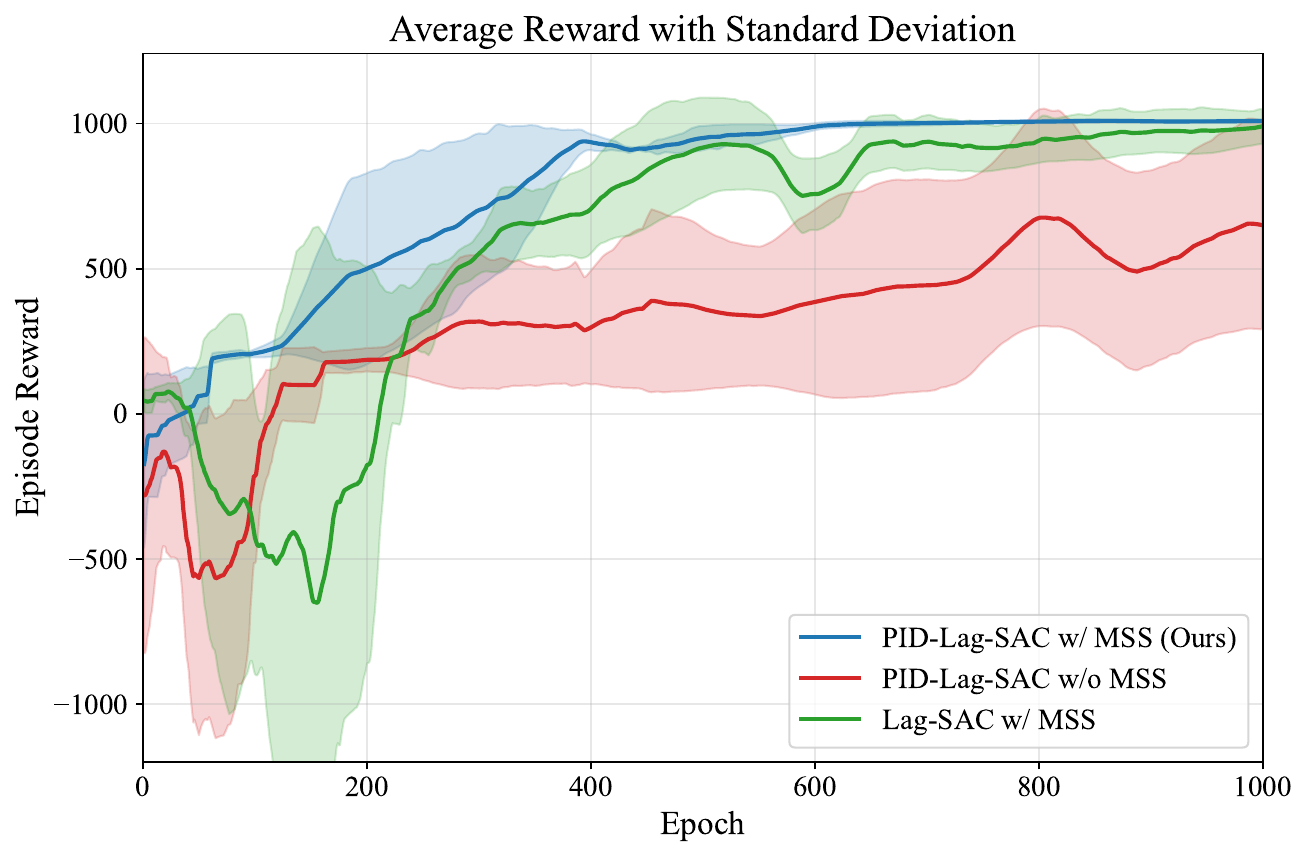}
\caption{Training reward curves of the three compared methods. }
\label{fig:training_reward}
\end{figure}

\textbf{Training Performance.}
Fig.~\ref{fig:training_reward} presents the average episode reward of the three compared methods during training.
The proposed PID-Lag-SAC w/ MSS exhibits superior performance in terms of convergence speed, training stability, and final reward magnitude.
After a short initial exploration phase, its reward increases rapidly and reaches a high-performance region within approximately 300--400 epochs, and then remains close to the highest reward level with relatively small deviation.
In comparison, Lag-SAC w/ MSS eventually approaches a comparable reward level but suffers from a pronounced early-stage performance drop and larger variance, suggesting that the standard gradient-based multiplier update leads to unstable constraint regulation during exploration.
Moreover, PID-based Lagrangian regulation achieves faster and more stable safety-cost convergence than the gradient-based alternative.
PID-Lag-SAC w/o MSS shows a clear reward ceiling and substantially larger uncertainty, suggesting that the instantaneous constraint-violation cost lacks sufficient forward-looking safety guidance for efficient policy improvement.
Overall, the proposed safe reinforcement learning architecture provides the best balance among convergence speed, training fluctuation suppression, final reward magnitude, and constraint-cost regulation, demonstrating the complementary benefits of the reachability-based safety cost defined by the motion safety set and PID-based multiplier adaptation.

\begin{table}[t]
\centering
\caption{Core inference metrics of the three methods averaged over ten random seeds.}
\label{tab:comparison}
\footnotesize
\resizebox{\columnwidth}{!}{%
\begin{tabular}{lcccc}
\hline
\rule{0pt}{2.65ex}\textbf{Method} & \textbf{Collision-free rate} $\uparrow$ & \textbf{Goal-reaching rate} $\uparrow$ & $\boldsymbol{V_{h,\gamma}^{\max}} \downarrow$ & $\boldsymbol{|\delta|_{\mathrm{mean}}}$ (rad) $\downarrow$ \\[0.3pt]
\hline
PID-Lag-SAC w/ MSS  & 100\% & 100\% & -0.019 & 0.093 \\
PID-Lag-SAC w/o MSS & 100\% & 70\%  & -0.001 & 0.103 \\
Lag-SAC w/ MSS      & 100\% & 90\%  & -0.017 & 0.115 \\
\hline
\end{tabular}
}
\end{table}

\textbf{Quantitative Comparison.}
Table~\ref{tab:comparison} summarizes representative inference metrics after training.
The collision-free rate is the percentage of runs without geometric overlap between the ego vehicle and the obstacle.
The goal-reaching rate is a stricter task-completion metric: a run is counted as successful only if the vehicle bypasses the obstacle and reaches the post-avoidance target at $X_{\text{target}} = 80~\mathrm{m}$ without leaving the road boundary or exhibiting an excessive heading deviation throughout the complete collision-avoidance maneuver.
As shown in Table~\ref{tab:comparison}, all three methods achieve a 100\% collision-free rate, confirming that each method can avoid direct collision in the evaluated runs.
However, the proposed PID-Lag-SAC w/ MSS is the only method that reaches the post-avoidance target in all seeds, while maintaining a clearly negative maximum reachability value and the smallest mean steering magnitude.
Without MSS, PID-Lag-SAC w/o MSS reaches the goal in only 70\% of the runs, and its maximum reachability value is close to zero, indicating a much smaller forward-looking safety margin despite collision-free execution.
With MSS but without PID-based multiplier regulation, Lag-SAC w/ MSS improves the goal-reaching rate to 90\% and maintains a larger reachability-based safety margin, but requires the largest mean steering magnitude among the three methods.
These results show that the proposed method does not simply increase conservativeness; instead, it achieves the best balance among collision avoidance, complete post-avoidance recovery, reachability-based safety margin, and control smoothness.

Taken together, the training and inference results demonstrate that the reachability-based motion safety set constraint improves proactive safety awareness, while PID-Lagrangian regulation stabilizes constrained policy optimization.
The proposed PID-Lag-SAC w/ MSS therefore provides a more reliable solution for emergency collision avoidance under low-adhesion conditions.

\subsection{Real-Vehicle Collision Avoidance Experiments}
\label{sec:result_real_vehicle}

As shown in Fig.~\ref{fig:test_vehicle}, the trained PID-Lag-SAC w/ MSS policy was deployed on a production-level passenger vehicle through a real-time onboard control architecture.
For comparison, the baseline method was designed following the collision-free emergency planning and control framework in \cite{zhao2023collisionfree}, which generates a collision-free reference trajectory based on surrounding-risk assessment and tracks it through a model predictive controller.

The real-vehicle experiment compares the proposed policy and the baseline method in the obstacle-avoidance scenario designed in the simulation study, with an initial speed of approximately 15 m/s and the electronic stability control (ESC) system disabled.
The tests were conducted on a low-adhesion test site, where the effective road adhesion coefficient was approximately $\mu=0.3$.
Although the test surface was prepared to be as spatially uniform as possible, unavoidable road-surface nonuniformity and tire--road interaction variations remain in the real-vehicle experiments.
The objective is to evaluate whether the proposed reachability-guided safe RL policy can provide a larger unified safety margin, jointly determined by the geometric collision-avoidance constraint and the chassis stability constraint, while producing a more compact avoidance trajectory and faster lateral-state recovery than the baseline method when deployed on the real vehicle.

\begin{figure}[t]
\centering
\includegraphics[width=\columnwidth]{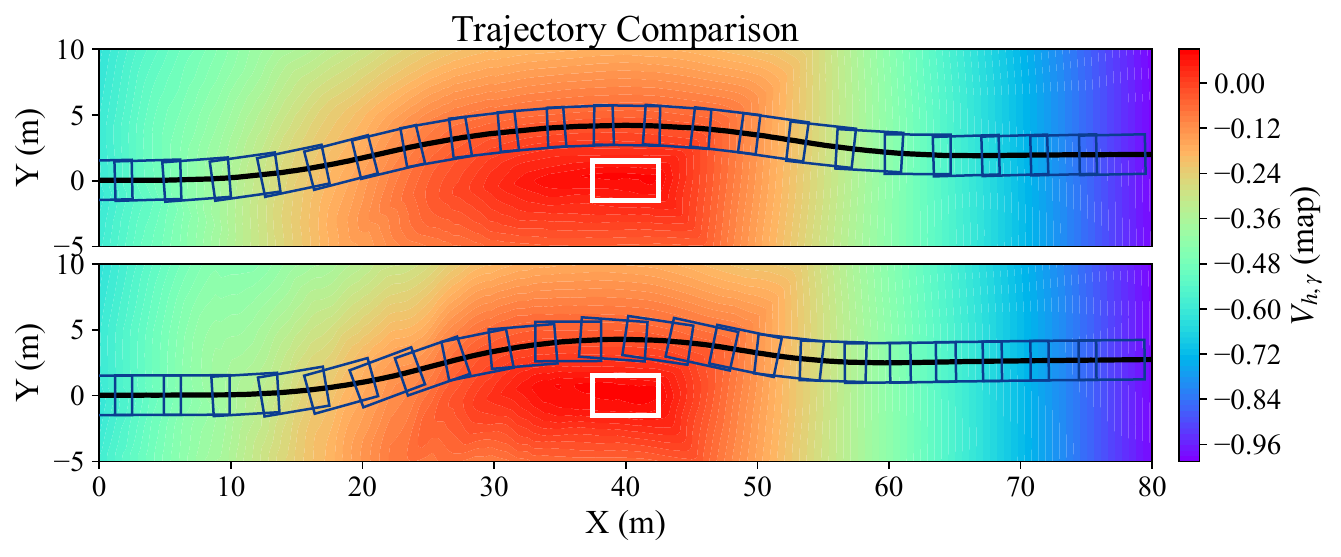}
\caption{Real-vehicle collision avoidance trajectories projected onto the motion safety set. The upper panel shows the proposed method, and the lower panel shows the baseline method.}
\label{fig:real_traj_safety_set}
\end{figure}

\begin{figure}[t]
\centering
\includegraphics[width=0.9\columnwidth]{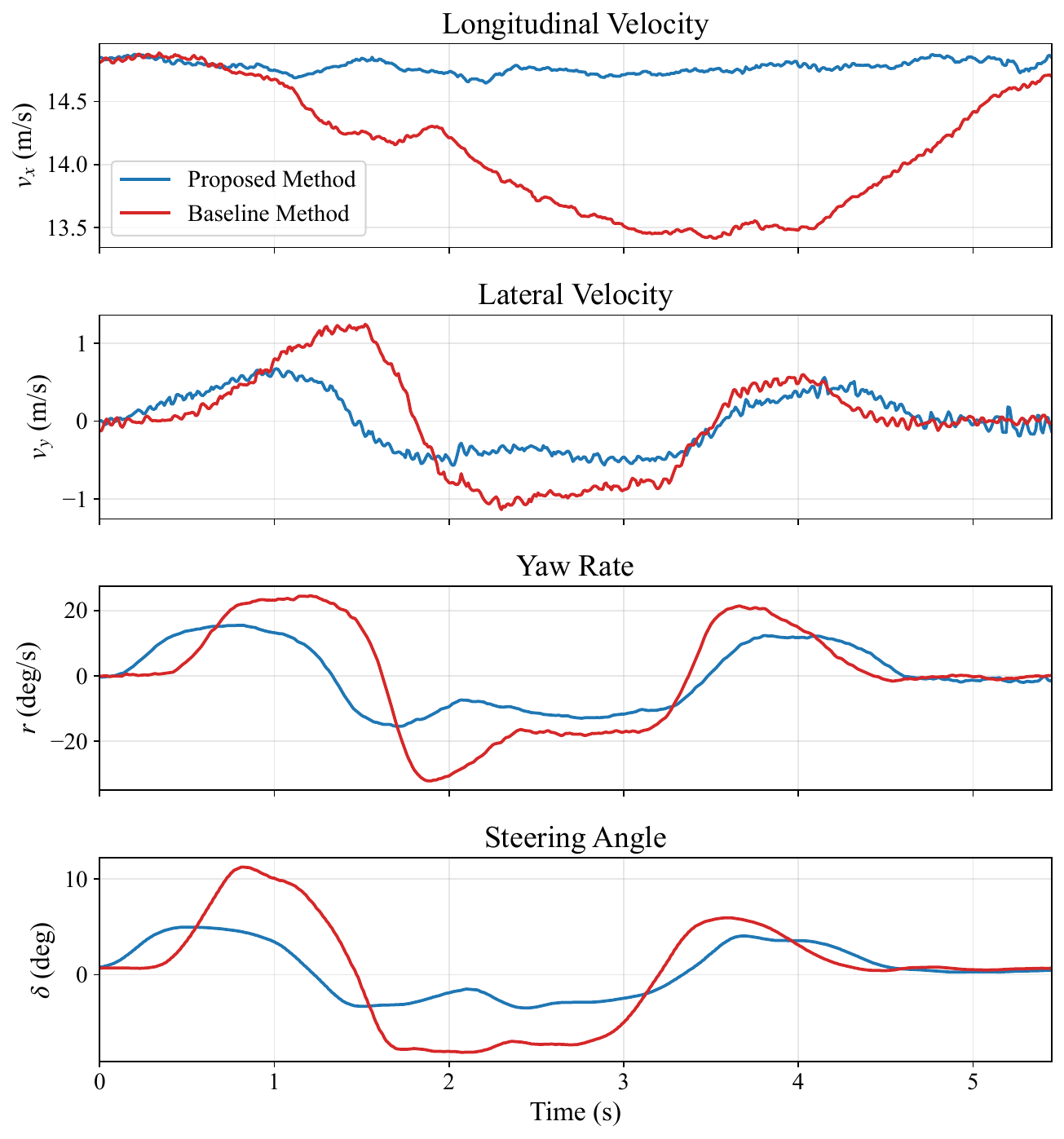}
\caption{Comparison of longitudinal velocity, lateral velocity, yaw rate, and steering response during the real-vehicle collision avoidance experiment.}
\label{fig:vh_vy_r_delta}
\end{figure}

\begin{table}[t]
\centering
\caption{Quantitative comparison of real-vehicle safety metrics. More negative values indicate larger safety margins.}
\label{tab:real_vehicle_metrics}
\scriptsize
\begin{tabular}{lccc}
\toprule
\textbf{Method} & $\boldsymbol{V_{h,\gamma}^{\max}}$ & $\boldsymbol{h_{\mathrm{env}}^{\max}}$ & $\boldsymbol{h_{\mathrm{chassis}}^{\max}}$ \\
\midrule
Proposed method & -0.031 & -1.148 & -3.966 \\
Baseline method & -0.022 & -1.039 & -2.563 \\
\bottomrule
\end{tabular}
\end{table}

Fig.~\ref{fig:real_traj_safety_set} compares the real-vehicle trajectories projected onto the learned reachability value field.
For each longitudinal slice, the value field is evaluated using the measured $v_x$, $v_y$, $r$, and $\psi$, so the displayed safety boundary reflects the local vehicle dynamic state rather than a static obstacle map.
The proposed method follows a more regular avoidance arc and keeps clearer lateral clearance from the obstacle envelope during the closest-approach phase.
Meanwhile, compared with the baseline trajectory that shows a more pronounced post-avoidance sideslip tendency, the proposed trajectory remains in a lower-value region of the reachability field.
The quantitative metrics in Table~\ref{tab:real_vehicle_metrics} further confirm this observation: the proposed method achieves lower maximum reachability values and lower maximum values of both the geometric collision-avoidance constraint and the chassis-stability constraint.
These results show that this comparison case exhibits clearer advantages in both geometric collision-avoidance safety margin and chassis stability margin.

Fig.~\ref{fig:vh_vy_r_delta} further compares the dynamic responses and steering inputs.
The $v_x$ responses are generally comparable, with the speed reduction during avoidance mainly associated with the steering maneuver, which increases lateral tire-force demand and couples longitudinal and lateral vehicle dynamics.
In the lateral direction, the proposed method reduces the peak lateral velocity from 1.24 m/s to 0.67 m/s, showing that the lateral motion is better suppressed during obstacle avoidance and recovery.
The baseline method exhibits a larger lateral velocity variation, which is consistent with the sideslip tendency observed in Fig.~\ref{fig:real_traj_safety_set}.
The yawrate response is also more stable under the proposed method, with the peak yawrate magnitude reduced from 32.1 deg/s to 15.5 deg/s.
In contrast, the baseline yawrate response increases again during the recovery phase, indicating a less settled lateral-yaw motion.
The steering response further confirms that this improvement is not achieved through aggressive steering: the peak absolute steering angle decreases from 11.25 deg to 4.96 deg, and the mean absolute steering rate decreases from 9.10 deg/s to 5.14 deg/s.
The proposed steering input changes more gradually around the avoidance and return phases, whereas the baseline requires a sharper correction after the obstacle is bypassed.
Therefore, the reachability-guided safe RL policy achieves a larger safety margin while producing smoother steering and a more stable vehicle posture.

Overall, the real-vehicle comparison demonstrates that the proposed policy can complete emergency collision avoidance on a low-adhesion road while maintaining a larger reachability-based unified safety margin than the baseline method.
It also yields milder lateral-yaw responses and steering demands, indicating smoother steering behavior and more stable post-avoidance recovery.
These results show that the PID-Lagrangian SAC method with motion-safety-set guidance does not simply lead to a more conservative avoidance maneuver, but provides forward-looking safety guidance that improves both collision-avoidance performance and vehicle dynamic stability.

\section{Conclusion}
\label{sec:conclusion}

This paper presented a motion-safety-set guided safe reinforcement learning framework for emergency collision avoidance under extreme driving conditions. A unified signed safety constraint was first formulated by combining the SAT-based geometric collision margin with a lateral-yaw stability envelope, enabling external obstacle avoidance and internal stability limits to be evaluated within a common formulation. Based on this constraint, a Hamilton-Jacobi reachability-based motion safety set was introduced to characterize a forward-looking safety representation that accounts for future state evolution beyond instantaneous proximity. Furthermore, an offline reinforcement learning scheme based on a discounted reachability Bellman operator was developed to approximate the reachability value function from large-scale extreme-driving data. The learned motion safety set was then incorporated into a PID-Lagrangian SAC framework as a safety cost, providing proactive constraint supervision that anticipates potential safety violations and guides policy optimization toward safe collision-avoidance behaviors.

Offline approximation results based on real-vehicle dataset showed that the learned motion safety set captures the influence of key dynamic factors such as longitudinal velocity and relative yaw angle on collision-avoidance feasibility, confirming that the safety boundary reflects vehicle dynamics beyond pure geometric clearance.
Comparative safe RL training results further demonstrated that the proposed PID-Lagrangian SAC guided by the motion safety set achieves improved learning stability and safety performance compared with baselines using instantaneous unified safety costs or standard Lagrangian updates. 
Real-vehicle experiments on a production passenger vehicle under low-adhesion conditions further verified that the trained policy can complete emergency obstacle avoidance with a smoother trajectory, preserve both geometric collision-avoidance and chassis-stability margins, and reduce lateral displacement while maintaining a coordinated lateral-yaw response.

Future work will extend the proposed framework to more complex emergency driving scenarios, including moving obstacles, multi-agent traffic interactions, and spatially varying road adhesion conditions. In addition, uncertainty-aware reachability learning will be investigated to improve the reliability of learned motion safety boundaries under distribution shifts, unseen extreme maneuvers, and imperfect state estimation.
These extensions will further enhance the safe operational capability of autonomous vehicles in more demanding and unpredictable driving environments.

\section*{Acknowledgements}
\label{sec:acknowledgements}

This work was supported by the National Key Research and Development Program of China under Grant 2024YFB2505104.

\appendix

\section{SAT-Based Collision Detection Details}
\label{app:sat}

Let $j\in\{E,O\}$ denote the ego vehicle and the obstacle, respectively.
The rectangular occupancy of object $j$ is represented by its center position $c_j\in\mathbb{R}^2$, yaw angle $\psi_j$, half-length $l_j$, and half-width $w_j$:
\begin{equation}
\mathcal{R}_j =
\left\{
c_j + R(\psi_j)q
\mid
|q_x|\le l_j,\ |q_y|\le w_j
\right\}.
\label{eq:rect_region}
\end{equation}
where $R(\psi_j)$ is the 2D rotation matrix and $q=[q_x, q_y]^\top$ denotes the local body-frame coordinates.

The longitudinal and lateral unit directions of rectangle $j$ are denoted by $e_j^l$ and $e_j^w$, respectively.
For two rectangles in the plane, the four candidate separating axes are $\mathcal{A} = \{e_E^l,e_E^w,e_O^l,e_O^w\}$.
For an arbitrary axis $a\in\mathcal{A}$, the projection radius of rectangle $j$ is:
\begin{equation}
\rho_j(a) = l_j|a^\top e_j^l| + w_j|a^\top e_j^w|.
\label{eq:projection_radius}
\end{equation}
The signed overlap margin along axis $a$ is then:
\begin{equation}
h_a(\mathbf{x}) = \rho_E(a)+\rho_O(a) - |a^\top(c_E-c_O)|.
\label{eq:axis_margin}
\end{equation}
If $h_a(\mathbf{x})<0$, the two rectangles are separated along axis $a$; if $h_a(\mathbf{x})>0$, their projections overlap.

\section{Lateral-Yaw Stability Envelope Derivation}
\label{app:stability}

The 2-DOF lateral-yaw dynamics, obtained by assuming constant $v_x$, are:
\begin{align}
m v_x (\dot{\beta} + r) &= F_{yf} + F_{yr} \label{eq:bicycle_lat} \\
I_z \dot{r} &= l_f F_{yf} - l_r F_{yr} \label{eq:bicycle_yaw}
\end{align}
Using the Fiala model linearized under unsaturated conditions, the front/rear tire forces are $F_{yf} = -\bar{C}_{\alpha f} \alpha_f$ and $F_{yr} = -\bar{C}_{\alpha r} \alpha_r$.
Substituting yields the state-space system in \eqref{eq:state_space} with:
\begin{equation}
\label{eq:AB_matrix}
\mathbf{A} = \begin{bmatrix}
-\frac{\bar{C}_{\alpha f} + \bar{C}_{\alpha r}}{m v_x} & \frac{l_r \bar{C}_{\alpha r} - l_f \bar{C}_{\alpha f}}{m v_x^2} - 1 \\[6pt]
\frac{l_r \bar{C}_{\alpha r} - l_f \bar{C}_{\alpha f}}{I_z} & -\frac{l_f^2 \bar{C}_{\alpha f} + l_r^2 \bar{C}_{\alpha r}}{I_z v_x}
\end{bmatrix}
\end{equation}

The mapping from slip angles to vehicle-centered variables is:
\begin{equation}
\label{eq:yaw_sideslip_relation}
\left\{
\begin{aligned}
r &= \frac{v_x\left(\alpha_f-\alpha_r+\delta_f\right)}{l_f+l_r} \\[6pt]
\beta &= \frac{l_r\alpha_f+l_f\left(\alpha_r+\delta_f\right)}{l_f+l_r}
\end{aligned}
\right.
\end{equation}

Applying the Routh-Hurwitz criterion ($\text{tr}(\mathbf{A}) < 0$ and $\det(\mathbf{A}) > 0$) leads to the stability boundary condition in \eqref{eq:stability_boundary}.

\section{Proof of the Contraction Property}
\label{app:contraction}

The supremum, infimum, and pointwise maximum operators are 1-Lipschitz under the $\|\cdot\|_\infty$ norm.
Fix any $(\mathbf{x}, \mathbf{u})$. By the definition of $\mathcal{P}^*$:
\begin{multline}
\label{eq:pointwise_diff}
|(\mathcal{P}^* Q_1)(\mathbf{x}, \mathbf{u}) - (\mathcal{P}^* Q_2)(\mathbf{x}, \mathbf{u})| \\
= \gamma \bigl|\sup_{\mathbf{d}} \max\{h(\mathbf{x}), A_1(\mathbf{d})\}
- \sup_{\mathbf{d}} \max\{h(\mathbf{x}), A_2(\mathbf{d})\}\bigr|
\end{multline}
where $\mathbf{x}' = f(\mathbf{x}, \mathbf{u}, \mathbf{d})$ and
$A_i(\mathbf{d}) = \inf_{\mathbf{u}'\in\mathcal{U}} Q_i(\mathbf{x}', \mathbf{u}')$, $i=1,2$.
For each fixed $\mathbf{d}$, applying the 1-Lipschitz property of $\inf$ over $\mathcal{U}$ yields:
\begin{equation}
\label{eq:inf_step}
|A_1(\mathbf{d}) - A_2(\mathbf{d})| \leq \sup_{\mathbf{u}'} |Q_1(\mathbf{x}', \mathbf{u}') - Q_2(\mathbf{x}', \mathbf{u}')|
\end{equation}
Since the map $z \mapsto \max\{h(\mathbf{x}),z\}$ is 1-Lipschitz, combining \eqref{eq:inf_step} with the 1-Lipschitz property of $\sup$ over $\mathcal{D}$ yields:
\begin{align}
\label{eq:sup_step}
\begin{split}
&\bigl|\sup_{\mathbf{d}} \max\{h(\mathbf{x}), A_1(\mathbf{d})\}
- \sup_{\mathbf{d}} \max\{h(\mathbf{x}), A_2(\mathbf{d})\}\bigr| \\
&\quad \leq \sup_{\mathbf{d}} \sup_{\mathbf{u}'}
|Q_1(\mathbf{x}',\mathbf{u}') - Q_2(\mathbf{x}',\mathbf{u}')| \\
&\quad \leq \|Q_1 - Q_2\|_\infty
\end{split}
\end{align}
Combining \eqref{eq:pointwise_diff} and \eqref{eq:sup_step} and taking the supremum over $(\mathbf{x}, \mathbf{u})$ yields \eqref{eq:contraction}. \hfill $\blacksquare$

Since $\mathcal{P}^*$ is a $\gamma$-contraction mapping, the Banach fixed-point theorem guarantees the existence and uniqueness of the fixed point $Q_{h,\gamma}^* = \mathcal{P}^* Q_{h,\gamma}^*$, and the value iteration $Q^{(k+1)} = \mathcal{P}^* Q^{(k)}$ converges globally at a geometric rate $\|Q^{(k)} - Q_{h,\gamma}^*\|_\infty \leq \gamma^k \|Q^{(0)} - Q_{h,\gamma}^*\|_\infty$.
In offline settings with empirical operator $\hat{\mathcal{P}}^*$ satisfying $\|\hat{\mathcal{P}}^* Q - \mathcal{P}^* Q\|_\infty \leq \epsilon$, the fixed-point error is bounded by $\|\hat{Q}_{h,\gamma}^* - Q_{h,\gamma}^*\|_\infty \leq \epsilon / (1 - \gamma)$.

\bibliographystyle{elsarticle-num}
\bibliography{references}

\end{document}